\theoremstyle{plain}
  \newtheorem{theorem}{Theorem}[section]
  \newtheorem{lemma}[theorem]{Lemma}
  \newtheorem{corollary}[theorem]{Corollary}
  \newtheorem{proposition}[theorem]{Proposition}
\theoremstyle{definition}
  \newtheorem{example}[theorem]{Example}
  \newtheorem{remark}[theorem]{Remark}
  \newtheorem{problem}[theorem]{Problem}
\newenvironment{acknowledgements}{\bigskip\textbf{Acknowledgements.}}{}
\newcommand{\qede}{\hspace*{\fill}$\Diamond$\medskip}
\newcommand{\op}[1]{\ensuremath{\operatorname{#1}}}
\renewcommand{\geq}{\geqslant}
\renewcommand{\leq}{\leqslant}
\newcommand{\pFq}[5]{\ensuremath{{}_{#1}F_{#2} \left( \genfrac{}{}{0pt}{}{#3}{#4} \bigg| {#5} \right)}}
\newcommand{\RR}{\mathbb{R}}
\newcommand{\assign}{:=}
\newcommand{\dueto}[1]{\textup{\textbf{(#1) }}}
\newcommand{\mathd}{\mathrm{d}}
\title{On lattice sums and Wigner limits}
\author{David Borwein}
\address{Department of Mathematics,
Western University,
Middlesex College,
London, Ontario N6A 3K7,
Canada}
\email{dborwein@uwo.ca}
\author{Jonathan M. Borwein}
\address{School of Mathematical and Physical Sciences,
University of Newcastle,
University Drive,
Callaghan, NSW 2308,
Australia}
\email{jonathan.borwein@newcastle.edu.au}
\author{Armin Straub}
\address{Department of Mathematics,
University of Illinois at Urbana-Champaign,
1409 W. Green St,
Urbana, IL 6180,
United States}
\curraddr{Max-Planck-Institut f\"ur Mathematik,
Vivatsgasse 7,
53111 Bonn,
Germany}
\email{astraub@illinois.edu}
\date{August 28, 2013}
\begin{document}

\begin{abstract}
  Wigner limits are given formally as the difference between a lattice sum, associated to a
  positive definite quadratic form, and a corresponding multiple integral.  To
  define these limits, which arose in work of Wigner on the energy of static
  electron lattices, in a mathematically rigorous way one commonly truncates
  the lattice sum and the corresponding integral and takes the limit along
  expanding hypercubes or other regular geometric shapes.
  We generalize the known  mathematically rigorous two and three dimensional results regarding Wigner
  limits, as laid down in {\cite{latticesums-bbs}}, to integer lattices of
  arbitrary dimension. In doing so, we also resolve a problem posed in
  {\cite[Chapter 7]{latticesums}}.

  For the sake of clarity, we begin by considering the simpler case of cubic
  lattice sums first, before treating the case of arbitrary quadratic forms.
  We also consider limits taken along expanding hyperballs with respect to
  general norms, and connect with classical topics such as Gauss's circle
  problem.  An appendix is included to recall certain properties of Epstein
  zeta functions that are either used in the paper or serve to provide
  perspective.
\end{abstract}

\keywords{lattice sums; analytic continuation; Wigner electron sums}

\maketitle

\section{Introduction}\label{sec:intro}

Throughout this paper,
$Q(x)=Q (x_1, \ldots, x_d)$ is a positive definite quadratic form in
$d$ variables with real coefficients
and determinant $\Delta > 0$.
As proposed in {\cite[Chapter 7]{latticesums}}, we shall examine the behaviour of
\[ \sigma_N (s) := \alpha_N (s) - \beta_N (s) \]
as $N \rightarrow \infty$, where $\alpha_N$ and $\beta_N$ are given by
\begin{eqnarray}
  \alpha_N (s) & := & \sum_{n_1 = - N}^N \cdots \sum_{n_d = - N}^N \frac{1}{Q
  (n_1, \ldots, n_d)^s},  \label{eq:alpha}\\
  \beta_N (s) & := & \int_{- N - 1 / 2}^{N + 1 / 2} \cdots \int_{- N - 1 /
  2}^{N + 1 / 2} \frac{\mathd x_1 \cdots \mathd x_d}{Q (x_1, \ldots, x_d)^s} .
  \label{eq:beta}
\end{eqnarray}
As usual, the summation in \eqref{eq:alpha} is understood to avoid the term
corresponding to $(n_1, \ldots, n_d) = (0, \ldots, 0)$. If $\op{Re} s > d /
2$, then $\alpha_N (s)$ converges to the Epstein zeta function $\alpha (s) =
Z_Q (s)$ as $N \rightarrow \infty$. A few basic properties of $Z_Q$ are
recollected in Section \ref{sec:basics}. On the other hand, each integral
$\beta_N (s)$ is only defined for $\op{Re} s < d / 2$.

\textit{A priori}
it is therefore unclear, for any $s$, whether the \emph{Wigner limit} $\sigma (s) \assign
\lim_{N \rightarrow \infty} \sigma_N (s)$ should exist. In the sequel, we will
write $\sigma_Q (s)$ when we wish to emphasize the
dependence on the quadratic form $Q$. For more on the physical background, which motivates the interest in the limit $\sigma
(s)$, we refer to Section \ref{sec:phys} below.

In the case $d = 2$, it was shown in {\cite[Theorem 1]{latticesums-bbs}} that
the limit $\sigma (s)$ exists in the strip $0 < \op{Re} s < 1$ and that it
coincides therein with the analytic continuation of $\alpha (s)$.
Further, in the case $d = 3$ with $Q (x) = x_1^2 + x_2^2 +
x_3^2$, it was shown in {\cite[Theorem 3]{latticesums-bbs}} that the limit
$\sigma (s)$ exists for $1 /
2 < \op{Re} s < 3 / 2$ as well as for $s = 1 / 2$. However, it was
determined that $\sigma (1 / 2) - \pi / 6 = \lim_{\varepsilon \rightarrow 0^+}
\sigma (1 / 2 + \varepsilon)$. In other words, the limit $\sigma (s)$ exhibits
a jump discontinuity at $s = 1 / 2$.

It is therefore natural to ask in what senses the phenomenon, observed for the cubic lattice when $d = 3$, extends
both to higher dimensions and to more general quadratic forms. We largely resolve the following problem which is a refinement of one posed in the recent book {\cite[Chapter 7]{latticesums}}.

\begin{problem}[Convergence] \label{prob}For dimension $d >1$, consider $\sigma_N$ as above.

 \begin{quote}Show that the limit $\sigma(s) \assign \lim_{N \rightarrow
  \infty} \sigma_N (s)$ exists in the strip $d / 2 - 1 < \op{Re} s < d / 2$.
  Does the limit exist for $s = d / 2 - 1$? If so, is the limit discontinuous
  at $s = d / 2 - 1$, and can the height of the jump discontinuity be
  evaluated?\end{quote}
\end{problem}

In Proposition \ref{prop:sigmastrip}, we show that the limit indeed exists in
the strip suggested in Problem \ref{prob}. In the case of $Q (x)
:= x_1^2 + \cdots + x_d^2$, we then show in Theorem \ref{thm:jump} that $\sigma(s)$ also converges for $s = d / 2 - 1$. As in the case $d = 3$, we find that
$\sigma(s)$ has a jump discontinuity, which we evaluate in closed form.  In Theorem \ref{thm:jumpx} we extend this result less explicitly to arbitrary positive definite quadratic forms $Q$.

\subsection{Motivation and physical background}\label{sec:phys}

As described in {\cite[Chapter 7]{latticesums}}:

{\small
\begin{quote}
  In 1934 Wigner introduced the concept of an electron gas bathed in a
  compensating background of positive charge as a model for a metal. He
  suggested that under certain circumstances the electrons would arrange
  themselves in a lattice, and that the body-centred lattice would be the most
  stable of the three common cubic structures. Fuchs (1935) appears to have
  confirmed this in a calculation on copper relying on physical properties of
  copper. The evaluation of the energy of the three cubic electron lattices
  under precise conditions was carried out by Coldwell-Horsefall and Maradudin
  (1960) and became the standard form for calculating the energy of static
  electron lattices. In this model electrons are assumed to be negative point
  charges located on their lattice sites and surrounded by an equal amount of
  positive charge uniformly distributed over a cube centered at the lattice
  point.
\end{quote}}

In three dimensions, this leads precisely to the problem enunciated in the
previous section. That is, Wigner, when $d=3$, $s=1/2$ and $Q(x)=x_1^2+x_2^2+x_3^2$, proposed considering, after appropriate renormalization, the entity
\begin{equation}
\sigma(s):= \sum_{n_1 = - \infty}^\infty \cdots \sum_{n_d = - \infty}^\infty \frac{1}{Q
  (n_1, \ldots, n_d)^s}- \int_{- \infty}^{\infty} \cdots \int_{-\infty}^{\infty} \frac{\mathd x_1 \cdots \mathd x_d}{Q (x_1, \ldots, x_d)^s} .
  \label{eq:wigner}
\end{equation}
As a physicist Wigner found it largely untroubling that in \eqref{eq:wigner} the object of study $\sigma(s)$ never makes unambiguous sense. Nor even does it point the way to formalize its mathematical content.
The concept is thus both natural physically and puzzling mathematically for the reasons given above
of the non-convergence of the integral whenever the sum converges.

 The best-behaved case is that of two dimensions, which is also physically meaningful if used to consider planar lamina.
In  \cite[\S3]{latticesums-bbsz} a `meta-principle' was presented justifying  the evaluation of $\sigma$ as the analytic continuation of $\alpha=Z_Q$. This was followed by a discussion and analysis of various important hexagonal and diamond, cubic and triangular lattices in two and three space  \cite[\S4]{latticesums-bbsz}. In particular, the values  of $\alpha$ obtained agreed with values in the physical literature whenever they were known. It was this work which led to the analysis  in \cite{latticesums-bbs}.

The entirety of \cite[Chapter 7]{latticesums} is
dedicated to the analysis of such `electron sums' in two and three dimensions.
While we make no direct claim for the physical relevance of the analysis with $d>3$, the delicacy of the mathematical resolution of Problem \ref{prob} is certainly informative even just for general forms in three dimensions.

\subsection{Structure of the paper}

The remainder of the paper is organized as follows. In Section \ref{sec:basics},
we establish some basic properties of $\alpha_N$ and $\beta_N$.
Then, in Section \ref{sec:wigconv}, we establish convergence in the strip for a general
quadratic form (Proposition \ref{prop:sigmastrip}). Next, in Section \ref{sec:wigjump},
we consider convergence on the boundary of the strip. In particular, we explicitly evaluate the jump
discontinuity in the cubic case (Theorem \ref{thm:jump}). In the non-cubic
case the same phenomenon is established, though the corresponding evaluation of the jump is less explicit (Theorem \ref{thm:jumpx}).
In Section \ref{sec:alt}, we consider other limiting procedures which
replace limits over expanding cubes by more general convex bodies. The paper
concludes with a brief accounting of the underlying theory of cubic lattice
sums in Appendix \ref{sec:cubiclatticesums}. For more details the reader is
referred to \cite{latticesums} and the other cited works.

\section{Basic analytic properties}\label{sec:basics}

Any quadratic form $Q(x)=Q (x_1, \ldots, x_d)$
can be expressed as
\begin{equation}\label{eq:Q}
  Q (x) = Q_A (x) \assign x^T A x = \sum_{1 \leq i, j \leq d}
  a_{i j} x_i x_j,
\end{equation}
for a matrix $A = (a_{i j})_{1 \leq i, j \leq d}$ which is symmetric
(that is, $a_{i j} = a_{j i}$ for all $1 \leq i, j \leq d$).
If $Q$ is positive definite, then $A$ is a positive definite matrix of
determinant $\Delta = \det(A) > 0$. A basic property of a positive definite
matrix $A$, given in most linear algebra texts, is that it can be decomposed as
$A=L^TL$, where $L$ is a non-singular matrix. This property is used
implicitly when making coordinate transformations as in \eqref{eq:intQQ} and in
the proof of Lemma \ref{lem:intQAB} below.

As indicated in the introduction, the limit of $\alpha_N (s)$ is the Epstein
zeta function
\begin{equation}
 \alpha(s) \assign Z_Q (s) := \sum_{n_1, \ldots, n_d}' \frac{1}{Q (n_1, n_2,
  \ldots, n_d)^s} \label{eq:def:epsteinzeta} .
\end{equation}
Standard arguments show that $Z_Q (s)$ is an analytic function in the domain
$\op{Re} s > d / 2$. In fact, see {\cite{epstein-zeta}} or {\cite[Chapter
2 or 8]{berndtknopp-hecke}}, the Epstein zeta function $Z_Q (s)$ has a meromorphic
continuation to the entire complex plane and satisfies the functional equation
\begin{equation}
  \frac{Z_Q (s) \Gamma (s)}{\pi^s} = \frac{1}{\sqrt{\Delta}}  \frac{Z_{Q^{-
  1}} (d / 2 - s) \Gamma (d / 2 - s)}{\pi^{d / 2 - s}},
  \label{eq:epsteinfunc}
\end{equation}
  where $ Q (x) = x^T A x$ and $Q^{-1} (x) = x^T A^{-1} x.$
Moreover, the only pole of $Z_Q (s)$ occurs at $s = d / 2$, is simple, and has
residue
\begin{equation}
  \op{res}_{d / 2} Z_Q (s) = \frac{1}{\sqrt{\Delta}}  \frac{\pi^{d /
  2}}{\Gamma (d / 2)} . \label{eq:epsteinres}
\end{equation}
A particularly important special case of Epstein zeta functions is that of cubic
lattice sums, which correspond to the choice $Q (x) = x_1^2 + \cdots + x_d^2$.
In Appendix \ref{sec:cubiclatticesums}, we recall some of their basic
properties, which provide further context for the questions discussed herein.

These remarks made, it is natural to begin our investigation of Problem
\ref{prob} by discussing some related properties of the limit of $\beta_N
(s)$.  In the sequel, we use the notation $\|x\|_{\infty} := \max (|x_1 |, \ldots,
|x_d |)$ for vectors $x = (x_1, \ldots, x_d) \in \RR^d$.

\begin{proposition}
  \label{prop:betares}Let $Q$ be a $d$-dimensional positive definite quadratic
  form of determinant $\Delta > 0$. The integral $\beta_N (s)$ extends
  meromorphically to the entire complex plane with a single pole at $s = d /
  2$, which is simple and has residue
  \begin{equation}
    \op{res}_{d / 2} \beta_N (s) = - \frac{1}{\sqrt{\Delta}}  \frac{\pi^{d /
    2}}{\Gamma (d / 2)} = - \op{res}_{d / 2} \alpha (s) .
    \label{eq:betaNres}
  \end{equation}
\end{proposition}

\begin{proof}
  The integral $\beta_N (s)$, as defined in \eqref{eq:beta}, is analytic for
  $\op{Re} s < d / 2$. On the other hand, we easily see that the difference
  \begin{equation}
    \beta_N (s) - \int_{Q (x) \leq N} \frac{1}{Q (x)^s} \mathd x
    \label{eq:betadiff}
  \end{equation}
  is an entire function in $s$. The latter integral can be evaluated in closed
  form. Indeed, for $\op{Re} s < d / 2$,
  \begin{eqnarray}\label{eq:intQQ}
    \int_{Q (x) \leq N} \frac{1}{Q (x)^s} \mathd x & = &
    \frac{1}{\sqrt{\Delta}} \int_{\|x\|_2^2 \leq N} \frac{1}{\|x\|^{2 s}_2}
    \mathd x\nonumber\\
    & = & \frac{1}{\sqrt{\Delta}} \op{vol} (\mathbb{S}^{d - 1})
    \int_0^{\sqrt{N}} r^{d - 1 - 2 s} \mathd r\nonumber\\
    & = & \frac{1}{\sqrt{\Delta}} \frac{N^{d / 2 - s}}{d / 2 - s}
    \frac{\pi^{d / 2}}{\Gamma (d / 2)} .
  \end{eqnarray}
  In light of \eqref{eq:betadiff}, this shows that $\beta_N (s)$ has an
  analytic continuation to the full complex plane with a single pole at $s = d
  / 2$, which is simple and has residue as claimed in \eqref{eq:betaNres}. The
  second equality in \eqref{eq:betaNres} follows from \eqref{eq:epsteinres}.
\end{proof}

We note that the fact that the residue of $\beta_N (s)$ does not depend on $N$
reflects that, for any $N, M > 0$, the differences $\beta_N (s) - \beta_M (s)$
are, as in \eqref{eq:betadiff}, entire functions.

We further record that the proof of Proposition \ref{prop:betares} is related
to the following observation. For any reasonable function $F_s : \RR^d
\rightarrow \RR$ such that $F_s (\lambda x) = | \lambda |^{- 2 s} F_s
(x)$,
\begin{eqnarray}
  \int_{\|x\|_{\infty} \leq 1} F_s (x) \mathd \lambda_d & = & \int_0^1
  \int_{\|x\|_{\infty} = t} F_s (x) \mathd \lambda_{d - 1} \mathd t
  \nonumber\\
  & = & \int_0^1 t^{d - 1 - 2 s} \int_{\|x\|_{\infty} = 1} F_s (x) \mathd
  \lambda_{d - 1} \mathd t \nonumber\\
  & = & \frac{1}{d - 2 s} \int_{\|x\|_{\infty} = 1} F_s (x) \mathd \lambda_{d
  - 1},  \label{eq:intFs}
\end{eqnarray}
where $\lambda_d$ denotes the $d$-dimensional Lebesgue measure and $\lambda_{d
- 1}$ the induced $(d - 1)$-dimensional surface measure (that is, $\mathd
\lambda_{d - 1} = \mathd x_1 \cdots \mathd x_{j - 1} \mathd x_{j + 1} \cdots
\mathd x_d$ on the part of the domain where $x_j$ is constant).

\begin{remark}
  Since the notation used in (\ref{eq:intFs}) is rather terse, let us, for
  instance, spell out the crucial first equality. By separating the variable of
  maximal absolute value and then interchanging summation and integration,
  \begin{eqnarray*}
    \int_{\|x\|_{\infty} \leq 1} F_s (x) \mathd \lambda_d & = & 2 \sum_{j
    = 1}^d \int_0^1 \left( \int_{[- x_j, x_j]^{d - 1}} F_s (x) \mathd x_1
    \cdots \mathd x_{j - 1} \mathd x_{j + 1} \cdots \mathd x_d \right) \mathd
    x_j\\
    & = & \int_0^1 \left( \sum_{j = 1}^d \int_{\substack{\|x\|= |t| \\ x_j = \pm t}}
    F_s (x) \mathd x_1 \cdots \mathd x_{j - 1} \mathd x_{j + 1}
    \cdots \mathd x_d \right) \mathd t\\
    & = & \int_0^1 \int_{\|x\|_{\infty} = t} F_s (x) \mathd \lambda_{d - 1}
    \mathd t.
  \end{eqnarray*}
  We note that the relation between first and final integral also holds with
  $\| \cdot \|_{\infty}$ replaced by $\| \cdot \|_2$ (in which case
  $\lambda_{d - 1}$ would now refer to the surface measure on the Euclidean
  sphere $\{x \in \RR^d : \|x\|_2 = t\}$).
  \qede
\end{remark}

Based on \eqref{eq:intFs}, we obtain the following consequence of Proposition
\ref{prop:betares}, which will be important for our purposes later on.

\begin{lemma}
  \label{lem:intQdi}Let $Q$ be a $d$-dimensional positive definite quadratic
  form of determinant $\Delta > 0$. Then we have
  \begin{equation}
    \int_{\|x\|_{\infty} = 1} \frac{1}{Q (x)^{d / 2}} \mathd \lambda_{d - 1} =
    \frac{2}{\sqrt{\Delta}}  \frac{\pi^{d / 2}}{\Gamma (d / 2)} .
    \label{eq:intQdi}
  \end{equation}
\end{lemma}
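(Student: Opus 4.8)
The plan is to apply the identity \eqref{eq:intFs} to the homogeneous function $F_s (x) = Q (x)^{-s}$, which satisfies $F_s (\lambda x) = | \lambda |^{- 2 s} F_s (x)$ because $Q (\lambda x) = \lambda^2 Q (x)$, and then to compare residues at $s = d / 2$. Writing $J (s) \assign \int_{\|x\|_{\infty} = 1} Q (x)^{-s} \mathd \lambda_{d - 1}$ for the quantity whose value at $s = d / 2$ we wish to determine, the identity \eqref{eq:intFs} takes the form
\[ \int_{\|x\|_{\infty} \leq 1} \frac{1}{Q (x)^s} \mathd \lambda_d = \frac{1}{d - 2 s} J (s) . \]

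First I would observe that $J (s)$ is entire. Indeed, on the compact set $\{\|x\|_{\infty} = 1\}$ the positive definite form $Q$ is bounded below by a positive constant, so for each such fixed $x$ the map $s \mapsto Q (x)^{-s}$ is entire and locally uniformly bounded; differentiating under the integral sign then shows that $J$ is entire. Hence the right-hand side above has a single, simple pole at $s = d / 2$ arising solely from the factor $1 / (d - 2 s)$, and since $\op{res}_{d / 2} (d - 2 s)^{-1} = - 1 / 2$ we obtain
\[ \op{res}_{d / 2} \frac{J (s)}{d - 2 s} = - \frac{J (d / 2)}{2} . \]

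Next I would compute the residue of the left-hand side independently. The cube $\{\|x\|_{\infty} \leq 1\} = [- 1, 1]^d$ is exactly the domain defining $\beta_{1 / 2} (s)$, and the scaling computation \eqref{eq:intQQ} used in the proof of Proposition \ref{prop:betares} never relied on $N$ being an integer; thus the left-hand side carries the residue recorded in \eqref{eq:betaNres}, namely $- \frac{1}{\sqrt{\Delta}} \frac{\pi^{d / 2}}{\Gamma (d / 2)}$. Equating the two computations of the residue gives $J (d / 2) = \frac{2}{\sqrt{\Delta}} \frac{\pi^{d / 2}}{\Gamma (d / 2)}$, which is precisely \eqref{eq:intQdi}.

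The only genuinely delicate point is thus the justification that the left-hand integral carries the residue of \eqref{eq:betaNres} even though $N = 1 / 2$ is not an integer. The safest way to see this, should one wish to avoid re-examining \eqref{eq:intQQ}, is to note that $\int_{\|x\|_{\infty} \leq 1} Q (x)^{-s} \mathd \lambda_d$ differs from $\int_{Q (x) \leq c} Q (x)^{-s} \mathd x$, for a suitable constant $c$, by an integral over a region on which $Q$ is bounded away from zero, hence by an entire function; the residue of the latter integral is then read off directly from \eqref{eq:intQQ}. I expect no further obstacles, the argument being essentially residue bookkeeping once the analyticity of $J$ is established.
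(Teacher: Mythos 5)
Your proposal is correct and follows essentially the same route as the paper: apply the identity \eqref{eq:intFs} to $F_s(x)=Q(x)^{-s}$, read off the residue at $s=d/2$ of the resulting factor $(d-2s)^{-1}$ times the (entire) surface integral, and identify the residue of the volume integral over the cube with that computed in Proposition \ref{prop:betares} via \eqref{eq:intQQ}. The paper compresses the last step into the phrase ``in analogy with Proposition \ref{prop:betares}''; your explicit remarks that the surface integral is entire and that the cube and ball integrals differ by an entire function are exactly the details being elided there.
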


\begin{proof}
  From the arguments in the proof of Proposition \ref{prop:betares}, we know
  that
  \[ \int_{\|x\|_{\infty} \leq 1} \frac{1}{Q (x)^s} \mathd \lambda_d \]
  is a meromorphic function with a simple pole at $s = d / 2$. The computation
  in \eqref{eq:intFs} shows that
  \[ \int_{\|x\|_{\infty} = 1} \frac{1}{Q (x)^{d / 2}} \mathd \lambda_{d - 1}
     = - 2 \op{res}_{d / 2} \int_{\|x\|_{\infty} \leq 1} \frac{1}{Q
     (x)^s} \mathd \lambda_d = \frac{2}{\sqrt{\Delta}}  \frac{\pi^{d /
     2}}{\Gamma (d / 2)}, \]
  with the last equality following in analogy with Proposition
  \ref{prop:betares}.
\end{proof}

\begin{example}[Generalized $\arctan(1)$]
  The special case $Q (x) := x_1^2 + \cdots + x_d^2$ results in the integral
  evaluation
  \begin{equation}
    \int_{[- 1, 1]^{d - 1}} \frac{1}{(1 + x_1^2 + \cdots x_{d - 1}^2)^{d / 2}}
    \mathd x = \frac{1}{d}  \frac{\pi^{d / 2}}{\Gamma (d / 2)},
    \label{eq:arctan1}
  \end{equation}
  which has been derived in {\cite[Section 5.7.3]{bb}} as a radially invariant
  generalization of $\arctan (1)$.

  Let us indicate an alternative direct derivation of \eqref{eq:arctan1}. To
  this end, recall that the gamma function is characterized by
  \[ \frac{\Gamma (s)}{A^s} = \int_0^{\infty} t^{s - 1} e^{- A t} \mathd t. \]
  Applying this integral representation, which is valid for $\op{Re} s > 0$,
  with $A = 1 + x_1^2 + \cdots x_{d - 1}^2$, we find
  \begin{eqnarray*}
    &  & \int_{[- 1, 1]^{d - 1}} \frac{1}{(1 + x_1^2 + \cdots x_{d - 1}^2)^s}
    \mathd x\\
    & = & \frac{1}{\Gamma (s)} \int_{[- 1, 1]^{d - 1}} \int_0^{\infty} t^{s -
    1} e^{- (1 + x_1^2 + \cdots x_{d - 1}) t} \mathd t \mathd x\\
    & = & \frac{1}{\Gamma (s)} \int_0^{\infty} t^{s - 1} e^{- t} \int_{[- 1,
    1]^{d - 1}} e^{- (x_1^2 + \cdots x_{d - 1}) t} \mathd x \mathd t\\
    & = & \frac{1}{\Gamma (s)} \int_0^{\infty} t^{s - 1} e^{- t} \left(
    \int_{- 1}^1 e^{- x^2 t} \mathd x \right)^{d - 1} \mathd t\\
    & = & \frac{2}{\Gamma (s)} \int_0^{\infty} u^{2 s - 1} e^{- u^2} \left(
    \int_{- 1}^1 e^{- x^2 u^2} \mathd x \right)^{d - 1} \mathd u.
  \end{eqnarray*}
  In particular, for $s = d / 2$,
  \begin{eqnarray*}
    \int_{[- 1, 1]^{d - 1}} \frac{1}{(1 + x_1^2 + \cdots x_{d - 1}^2)^{d / 2}}
    \mathd x & = & \frac{2}{\Gamma (d / 2)} \int_0^{\infty} e^{- u^2} \left( u
    \int_{- 1}^1 e^{- x^2 u^2} \mathd x \right)^{d - 1} \mathd u.
  \end{eqnarray*}
  Define
  \[ f (u) = u \int_{- 1}^1 e^{- x^2 u^2} \mathd x, \]
  which, in terms of the error function, can be expressed as $f (u) =
  \sqrt{\pi} \op{erf} (u)$. We note that $f (u) \rightarrow \sqrt{\pi}$ as
  $u \rightarrow \infty$. Further, the derivative is simply
  \[ f' (u) = 2 e^{- u^2} . \]
  After the substitution $v = f (u)$, we thus find
  \[ \int_{[- 1, 1]^{d - 1}} \frac{1}{(1 + x_1^2 + \cdots x_{d - 1}^2)^{d /
     2}} \mathd x = \frac{1}{\Gamma (d / 2)} \int_0^{\sqrt{\pi}} v^{d - 1}
     \mathd v = \frac{1}{d}  \frac{\pi^{d / 2}}{\Gamma (d / 2)}, \]
  as claimed.
  \qede
\end{example}

As in \eqref{eq:Q}, we denote with $Q=Q_A$ the quadratic form $Q (x)$ on
$\RR^d$ associated with the symmetric matrix $A = (a_{i j})_{1 \leq i,
j \leq d}$.
We now record an extension of Lemma \ref{lem:intQdi}, which will prove useful
for our purposes (the case $B = A$ in \eqref{eq:intQAB} reduces to
\eqref{eq:intQdi}, and it is the case $B = A^2$ that will appear later).
Recall that the \emph{trace} of a square matrix is given by $\op{tr}A =
\sum_{j=1}^d a_{jj}$ and defines a Euclidean norm on the symmetric $d \times d$
matrices via $\langle A_1,A_2 \rangle = \op{tr} (A_1A_2)$.

\begin{lemma}
  \label{lem:intQAB}For matrices $A, B \in \RR^{d \times d}$, with $A$
  positive definite,
  \begin{equation}
    \int_{\|x\|_{\infty} = 1} \frac{Q_B (x)}{Q_A (x)^{1 + d / 2}} \mathd
    \lambda_{d - 1} = \frac{\op{tr} (B A^{- 1})}{\sqrt{\det (A)}}
    \frac{\pi^{d / 2}}{\Gamma (1 + d / 2)} . \label{eq:intQAB}
  \end{equation}
\end{lemma}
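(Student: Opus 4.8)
The plan is to follow the derivation of Lemma~\ref{lem:intQdi}, the essential point being that the integrand has the correct homogeneity. Since $Q_A$ and $Q_B$ are quadratic forms, $G_s(x) := Q_B(x)/Q_A(x)^s$ satisfies $G_s(\lambda x) = |\lambda|^{-2(s-1)} G_s(x)$; hence \eqref{eq:intFs} applies with $s$ there replaced by $s-1$, giving
\[
  \int_{\|x\|_{\infty} \le 1} \frac{Q_B(x)}{Q_A(x)^s}\,\mathd\lambda_d
  = \frac{1}{d+2-2s} \int_{\|x\|_{\infty}=1} \frac{Q_B(x)}{Q_A(x)^s}\,\mathd\lambda_{d-1}.
\]
Exactly as in Lemma~\ref{lem:intQdi}, the sought-after surface integral \eqref{eq:intQAB} is then $-2$ times the residue at $s=1+d/2$ of the solid integral on the left.

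First I would check that this residue is unaffected by replacing the cube $\{\|x\|_\infty\le 1\}$ with the ellipsoid $\{Q_A(x)\le 1\}$: the two regions differ only on a set bounded away from the origin, on which $Q_B(x)/Q_A(x)^s$ is entire in $s$, so the difference of the corresponding solid integrals is entire (compare \eqref{eq:betadiff}). Writing $A = L^TL$ with $L$ nonsingular and substituting $y=Lx$ turns the ellipsoid into the Euclidean unit ball, sends $\mathd x$ to $\mathd y/\sqrt{\det A}$, and gives $Q_A(x)=\|y\|_2^2$ and $Q_B(x)=y^TCy$ with $C:=L^{-T}BL^{-1}$. Passing to polar coordinates $y=r\omega$ then separates the radial and angular variables,
\[
  \int_{Q_A(x)\le 1}\frac{Q_B(x)}{Q_A(x)^s}\,\mathd\lambda_d
  = \frac{1}{\sqrt{\det A}}\left(\int_0^1 r^{d+1-2s}\,\mathd r\right)\int_{\mathbb{S}^{d-1}} \omega^TC\omega \,\mathd\lambda_{d-1},
\]
and the radial factor $\int_0^1 r^{d+1-2s}\,\mathd r = 1/(d+2-2s)$ supplies the simple pole, with residue $-1/2$, at $s=1+d/2$.

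It remains to evaluate the angular moment $\int_{\mathbb{S}^{d-1}}\omega^TC\omega\,\mathd\lambda_{d-1}$. By the symmetry of the sphere the off-diagonal terms integrate to zero while $\int_{\mathbb{S}^{d-1}}\omega_i^2\,\mathd\lambda_{d-1} = \op{vol}(\mathbb{S}^{d-1})/d$ for each $i$, so the moment equals $\tfrac{1}{d}\op{tr}(C)\op{vol}(\mathbb{S}^{d-1})$. The one genuinely algebraic step, and the only place I expect to have to take care, is the trace identity $\op{tr}(C) = \op{tr}(L^{-T}BL^{-1}) = \op{tr}(BL^{-1}L^{-T}) = \op{tr}(BA^{-1})$, which uses cyclicity of the trace together with $L^{-1}L^{-T} = (L^TL)^{-1} = A^{-1}$. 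Assembling these pieces --- multiplying the residue $-\tfrac12\cdot\tfrac{1}{\sqrt{\det A}}\cdot\tfrac1d\op{tr}(BA^{-1})\op{vol}(\mathbb{S}^{d-1})$ by $-2$, and simplifying with $\op{vol}(\mathbb{S}^{d-1}) = 2\pi^{d/2}/\Gamma(d/2)$ and $d\,\Gamma(d/2)=2\Gamma(1+d/2)$ --- yields \eqref{eq:intQAB}. If $B$ is not symmetric, only its symmetric part enters $Q_B$, but this leaves $\op{tr}(BA^{-1})$ unchanged, so no generality is lost.
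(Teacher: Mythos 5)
Your proposal is correct and follows essentially the same route as the paper's proof: reduce the surface integral via the homogeneity identity \eqref{eq:intFs} to $-2$ times a residue, pass to the ellipsoid $\{Q_A\le 1\}$ (the paper phrases this as the residue being the same for any compact region containing a neighborhood of the origin), substitute $A=L^TL$, kill the off-diagonal terms of $C=L^{-T}BL^{-1}$ by symmetry with the second moment contributing a factor $\op{tr}(C)/d$, and finish with cyclicity of the trace. The only cosmetic differences are that you use polar coordinates on the Euclidean ball where the paper uses the identity $\sum_i x_i^2=\|x\|_2^2$ inside the ball, and you shift the exponent parameter so the pole sits at $s=1+d/2$ rather than $d/2$; your closing remark on non-symmetric $B$ is a correct extra observation.
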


\begin{proof}
  On decomposing $A$ as $A = L^T L$, we find
  \[ \int_{Q (x) \leq 1} \frac{Q_B (x)}{Q_A (x)^{s + 1}} \mathd x =
     \frac{1}{\det (L)} \int_{\|x\|_2^2 \leq 1} \frac{Q_C (x)}{\|x\|^{2
     s + 2}_2} \mathd x, \]
  with $C \assign (L^{- 1})^T B L^{- 1}$. For the residue of the latter integral
  only the quadratic terms $C_{11} x_1^2 + \cdots + C_{d d} x_d^2$ in $Q_C
  (x)$ contribute; indeed, in the present case the contributions of the mixed
  terms $C_{i j} x_i x_j$, $i \neq j$, integrate to zero. Because of symmetry
  we thus obtain
  \begin{eqnarray*}
    \int_{Q (x) \leq 1} \frac{Q_B (x)}{Q_A (x)^{s + 1}} \mathd x & = &
    \frac{\op{tr} (C)}{\sqrt{\det (A)}} \int_{\|x\|_2^2 \leq 1}
    \frac{x_1^2}{\|x\|^{2 s + 2}_2} \mathd x\\
    & = & \frac{\op{tr} (C)}{d \sqrt{\det (A)}} \int_{\|x\|_2^2 \leq
    1} \frac{1}{\|x\|^{2 s}_2} \mathd x\\
    & = & \frac{\op{tr} (C)}{d \sqrt{\det (A)}} \frac{1}{d / 2 - s}
    \frac{\pi^{d / 2}}{\Gamma (d / 2)},
  \end{eqnarray*}
  with the final step as in \eqref{eq:intQQ}. Since the trace is commutative,
  \[ \op{tr} (C) = \op{tr} (L^{- T} B L^{- 1}) = \op{tr} (B L^{-
     1} L^{- T}) = \op{tr} (B A^{- 1}). \]
  We conclude that, for any compact region $D \subset \RR^d$
  containing a neighborhood of the origin,
  \begin{equation}
    \op{res}_{d / 2} \int_D \frac{Q_B (x)}{Q_A (x)^{s + 1}} \mathd x = -
    \frac{\op{tr} (B A^{- 1})}{d \sqrt{\det (A)}}  \frac{\pi^{d / 2}}{\Gamma
    (d / 2)} . \label{eq:resintQAB}
  \end{equation}
  In light of the computation \eqref{eq:intFs}, we arrive at
  \[ \int_{\|x\|_{\infty} = 1} \frac{Q_B (x)}{Q_A (x)^{1 + d / 2}} \mathd
     \lambda_{d - 1} = - 2 \op{res}_{d / 2} \int_{\|x\|_{\infty} \leq
     1} \frac{Q_B (x)}{Q_A (x)^{s + 1}} \mathd x, \]
  which, together with \eqref{eq:resintQAB}, implies \eqref{eq:intQAB}.
\end{proof}

\section{Convergence of Wigner limits}\label{sec:wigconv}

Our next goal is to show that $\sigma_N (s)$ indeed converges in the vertical
strip suggested in Problem \ref{prob}.  As discussed in \cite[Chapter 2 and
8]{latticesums}, convergence over such hyper-cubes is more stable than that
over Euclidean balls and similar shapes. Other limit procedures are compared in Section
\ref{sec:alt}.

\begin{proposition}[Convergence in a strip]
  \label{prop:sigmastrip}Let $Q$ be an arbitrary positive definite quadratic form on $\RR^d$. Then
  the limit $\sigma (s) \assign \lim_{N \rightarrow \infty} \sigma_N (s)$
  exists in the strip $d / 2 - 1 < \op{Re} s < d / 2$ and coincides therein
  with the analytic continuation of $\alpha (s)$.
\end{proposition}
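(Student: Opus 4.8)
The plan is to realize $\sigma_N(s)$ as a sum of midpoint-rule errors over unit cubes. Writing $C \assign [-1/2,1/2]^d$, the integration domain of $\beta_N$ in \eqref{eq:beta} is tiled by the translates $n + C$ with $\|n\|_\infty \le N$, so for $\op{Re} s < d/2$ (where every integral below converges)
\begin{equation*}
  \sigma_N(s) = \sum_{\substack{\|n\|_\infty \le N \\ n \ne 0}} a_n(s) - \int_C \frac{\mathd x}{Q(x)^s}, \qquad a_n(s) \assign \frac{1}{Q(n)^s} - \int_{n + C} \frac{\mathd x}{Q(x)^s},
\end{equation*}
the final term accounting for the cube about the origin, which is present in $\beta_N$ but excluded from $\alpha_N$. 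Each $a_n(s)$ with $n \ne 0$ is entire in $s$, and the term $\int_C Q(x)^{-s}\mathd x$ does not depend on $N$.

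First I would show that $\sum_{n \ne 0} a_n(s)$ converges locally uniformly for $\op{Re} s > d/2 - 1$. Since $\int_C (x - n)\,\mathd x = 0$ by symmetry, a second-order Taylor expansion of $x \mapsto Q(x)^{-s}$ about $n$ gives $|a_n(s)| \le C_d \sup_{x \in n + C} \|H(x)\|$, where $H$ denotes the Hessian of $Q(x)^{-s}$. Because $Q(x)^{-s}$ is homogeneous of degree $-2s$ and $Q(x) \asymp \|x\|_2^2$, one has $\|H(x)\| \le C(s)\, \|x\|_2^{-2\op{Re} s - 2}$ away from the origin; hence $|a_n(s)| \le C(s)\, \|n\|_2^{-2\op{Re} s - 2}$ for $\|n\|_\infty \ge 2$. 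Comparing with $\sum_n \|n\|_2^{-\alpha}$, which converges precisely for $\alpha > d$, the series converges exactly when $2\op{Re} s + 2 > d$, that is $\op{Re} s > d/2 - 1$, the bound being locally uniform in $s$. Intersecting with the half-plane $\op{Re} s < d/2$ needed for $\int_C Q(x)^{-s}\mathd x$, this already shows that $\sigma(s) = \sum_{n \ne 0} a_n(s) - \int_C Q(x)^{-s}\mathd x$ exists and is analytic throughout the strip $d/2 - 1 < \op{Re} s < d/2$.

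It remains to identify $\sigma$ with the continuation of $\alpha = Z_Q$. Set $G(s) \assign \sum_{n \ne 0} a_n(s)$, analytic for $\op{Re} s > d/2 - 1$. For $\op{Re} s > d/2$ the two pieces of each $a_n$ may be summed separately, and since the cubes $n + C$ tile $\RR^d$,
\begin{equation*}
  G(s) = Z_Q(s) - J(s), \qquad J(s) \assign \int_{\RR^d \setminus C} \frac{\mathd x}{Q(x)^s},
\end{equation*}
both terms converging there. Writing $I_0(s) \assign \int_C Q(x)^{-s}\mathd x$, the identity $\sigma = Z_Q$ will follow once I show $I_0(s) = -J(s)$ as meromorphic functions. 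This is the crux: passing to $\|\cdot\|_\infty$-radial coordinates as in \eqref{eq:intFs} and using the homogeneity of $Q(x)^{-s}$, both $I_0$ and $J$ are constant multiples of the entire function $K(s) \assign \int_{\|x\|_\infty = 1} Q(x)^{-s}\,\mathd\lambda_{d-1}$, namely $I_0(s) = \tfrac{(1/2)^{d-2s}}{d - 2s} K(s)$ for $\op{Re} s < d/2$ and $J(s) = \tfrac{(1/2)^{d-2s}}{2s - d} K(s)$ for $\op{Re} s > d/2$, so their meromorphic continuations are negatives of one another. Consequently $G(s) - I_0(s) = Z_Q(s) - J(s) + J(s) = Z_Q(s)$ for $\op{Re} s > d/2$, and since both sides are meromorphic on $\op{Re} s > d/2 - 1$ they agree there by uniqueness of continuation; restricting to the strip yields $\sigma(s) = Z_Q(s)$.

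The midpoint/Taylor estimate is routine; the genuine obstacle is the identification step, where the half-planes of convergence for $\alpha_N$ (needing $\op{Re} s > d/2$) and $\beta_N$ (needing $\op{Re} s < d/2$) are disjoint, so the limit must be matched to $Z_Q$ purely through analytic continuation. The clean cancellation $I_0 = -J$, which reflects that the meromorphic continuation of $\int_{\RR^d} Q(x)^{-s}\mathd x$ vanishes identically, is what makes this matching possible, and verifying it carefully is where the delicacy lies.
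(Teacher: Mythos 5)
Your proposal is correct and follows essentially the same route as the paper: your per-cube errors $a_n(s)$ are exactly the summands whose shell-sums form the paper's $\delta_N(s)$, the second-order Taylor bound with the odd term killed by cube symmetry is the paper's estimate, and your identification via $I_0(s)=-J(s)$ is the same homogeneity argument the paper packages as $\beta_N(s)=(2N+1)^{d-2s}\beta_0(s)$ together with $\lim_{N\to\infty}\beta_N(s)=0$ for $\op{Re}s>d/2$. No gaps.
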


\begin{proof}
  For the first part of the claim, we follow the proof given in
  {\cite{latticesums-bbs}} for binary forms $Q$. Fix $\sigma > 0$ as well as
  $R > 0$ and set $\Omega := \{s : \op{Re} s > \sigma, \hspace{1em} |s| <
  R\}$. All order terms below are uniform with respect to $s$ in the bounded
  region $\Omega$. For $N \geq 1$ let
  \begin{eqnarray*}
    \delta_N (s) & := & \sigma_N (s) - \sigma_{N - 1} (s)\\
    & = & \sum_{\|n\|_{\infty} = N} \int_{\|x\|_{\infty} \leq 1 / 2}
    \left[ \frac{1}{Q (n)^s} - \frac{1}{Q (n + x)^s} \right] \mathd x.
  \end{eqnarray*}

  Here and in the sequel, we let $f (x) := Q (n + x)^{- s}$ with $\|n\|_{\infty} = N$ and $\|x\|_{\infty}
  \leq 1 / 2$. Since we may assume $Q (x) = \sum_{i, j} a_{i j} x_i x_j$, with $a_{i j}
  = a_{j i}$, is positive definite, we have the estimate
  \begin{eqnarray*}
    f_{i j} (x) & = & \frac{4 s (s + 1)}{Q (n + x)^{s + 2}} \sum_k a_{i k}
    (n_k + x_k) \sum_{\ell} a_{j \ell} (n_{\ell} + x_{\ell}) - \frac{2 a_{i j}
    s}{Q (n + x)^{s + 1}}\\
    & = & O (N^{- 2 \sigma - 2}).
  \end{eqnarray*}
  Here, the indices of $f$ indicate partial derivatives with respect to the $i$-th or $j$-th variable.
  We thus have
  \begin{equation}
    f (x) - f (0) = \sum_i x_i f_{i} (0) + O (N^{- 2 \sigma - 2}).
    \label{eq:fx0}
  \end{equation}
  Consequently,
  \begin{eqnarray*}
    &  & \int_{\|x\|_{\infty} \leq 1 / 2} \left[ \frac{1}{Q (n)^s} -
    \frac{1}{Q (n + x)^s} \right] \mathd x\\
    & = & - \int_{\|x\|_{\infty} \leq 1 / 2} \left[ \sum_i x_i f_{i}
    (0) \right] \mathd x + O (N^{- 2 \sigma - 2})\\
    & = & O (N^{- 2 \sigma - 2}),
  \end{eqnarray*}
  because the final integral, being odd, vanishes.

  Hence,
  \[ \delta_N (s) = O (N^{d - 2 \sigma - 3}), \]
  and so, for all $s \in \Omega$, $| \delta_N (s) | \leq M N^{d - 2
  \sigma - 3}$ for some $M$, which is independent of $N$ and $s$. Assume now
  that $\sigma > d / 2 - 1$. Since $\delta_N (s)$ is an entire function, the
  Weierstrass $M$-test shows that
  \[ \delta (s) := \sum_{N = 1}^{\infty} \delta_N (s) \]
  is an analytic function in $\Omega$. Since $R$ was arbitrary, $\delta (s)$
  is in fact analytic in the half-plane $\op{Re} s > d / 2 - 1$. By
  construction,
  \begin{equation}
    \delta (s) = \lim_{N \rightarrow \infty} \left[ \sigma_N (s) - \sigma_0
    (s) \right] = \lim_{N \rightarrow \infty} \left[ \sigma_N (s) + \beta_0
    (s) \right] . \label{eq:delta1}
  \end{equation}
  It follows that the limit $\sigma (s)$ exists if, additionally, $\op{Re} s
  < d / 2$.

  \medskip

  For the second part of the claim, we begin with the simple observation that,
  for $\op{Re} s < d / 2$,
  \begin{eqnarray}
    \beta_N (s) & = & \int_{\|x\|_{\infty} \leq N + 1 / 2} \frac{1}{Q
    (x)^s} \mathd x \nonumber\\
    & = & (2 N + 1)^{d - 2 s} \int_{\|x\|_{\infty} \leq 1 / 2}
    \frac{1}{Q (x)^s} \mathd x \nonumber\\
    & = & (2 N + 1)^{d - 2 s} \beta_0 (s) .  \label{eq:betaN0}
  \end{eqnarray}
  As shown in Proposition \ref{prop:betares}, both $\beta_N$ and $\beta_0$
  have meromorphic extensions to the entire complex plane, and the relation
  \eqref{eq:betaN0} continues to hold. In particular, this shows that, for
  $\op{Re} s > d / 2$, the meromorphic continuation of $\beta_N$ satisfies
  \[ \lim_{N \rightarrow \infty} \beta_N (s) = \lim_{N \rightarrow \infty} (2
     N + 1)^{d - 2 s} \beta_0 (s) = 0. \]
  Working from \eqref{eq:delta1}, we thus have, for $\op{Re} s > d / 2$,
  \begin{equation}
    \delta (s) = \lim_{N \rightarrow \infty} \left[ \alpha_N (s) - \beta_N (s)
    + \beta_0 (s) \right] = \alpha (s) + \beta_0 (s) . \label{eq:deltaab}
  \end{equation}
  On the other hand, we have shown via \eqref{eq:delta1} that, for $\op{Re}
  s < d / 2$,
  \begin{equation}
    \delta (s) = \sigma (s) + \beta_0 (s) . \label{eq:deltasb}
  \end{equation}
  Since both $\delta (s)$ and $\beta_0 (s)$ are meromorphic in the half-plane
  $\op{Re} s > d / 2 - 1$, comparing \eqref{eq:deltaab} and
  \eqref{eq:deltasb} proves that the analytic continuations of $\sigma (s)$
  and $\alpha (s)$ agree. In particular, in the strip $d / 2 - 1 < \op{Re} s
  < d / 2$, the limit $\sigma (s)$, which was shown to exist, equals the
  analytic continuation of $\alpha (s)$.
\end{proof}

We note that Proposition \ref{prop:sigmastrip} agrees with the results known
for $d = 2, 3$. In the $d = 2$ case, the limit $\sigma (s)$ exists for $0 <
\op{Re} s < 1$, in accordance with {\cite[Theorem 1]{latticesums-bbs}}. In
the $d = 3$ case, the limit $\sigma (s)$ exists for $1 / 2 < \op{Re} s < 3 /
2$, which is consistent with the special case of the cubic lattice discussed
in {\cite[Theorem 3]{latticesums-bbs}}.

\section{Jump discontinuities in Wigner limits}\label{sec:wigjump}

In {\cite[Theorem 3]{latticesums-bbs}} it was shown that, in the case of the
cubic lattice, the limit $\sigma (s)$ also exists for $s = 1 / 2$, but is
discontinuous there. In fact, it was shown that \[\sigma (1 / 2) - \pi / 6
= \lim_{\varepsilon \rightarrow 0^+} \sigma (1 / 2 + \varepsilon).\] We now
extend this result to cubic lattices in arbitrary dimensions, in which case
we can and do evaluate the jump discontinuity in simple terms.
We then show that an analogous result is true for arbitrary positive definite quadratic forms,
though the proof is more technical and no simple closed-form expression for the jumps is
given.

\begin{remark}[$\sigma(0)$]
  \label{rk:sigma0}Note that, for trivial reasons, the limit $\sigma (0)$
  always exists and is given by $\sigma (0) = - 1$, which agrees with the
  value $\alpha (0) = - 1$, obtained by analytic continuation from
  \eqref{eq:epsteinfunc} and \eqref{eq:epsteinres}. (The value $s = 0$ is
  missed in the statement of Theorem 3 in {\cite{latticesums-bbs}}.) \qede
\end{remark}

\begin{theorem}[Cubic jump discontinuity]
  \label{thm:jump}Let $Q (x) = x_1^2 + \cdots + x_d^2$. Then the corresponding limit $\sigma
  (s) \assign \lim_{N \rightarrow \infty} \sigma_N (s)$ exists in the strip $d
  / 2 - 1 < \op{Re} s < d / 2$ and for $s = d / 2 - 1$. In the strip,
  $\sigma (s)$ coincides with the analytic continuation of $\alpha (s)$. On
  the other hand,
  \[ \sigma (d / 2 - 1) - \frac{1}{6}  \frac{\pi^{d / 2}}{\Gamma (d / 2 - 1)}
     = \alpha (d / 2 - 1) = \lim_{\varepsilon \rightarrow 0^+} \sigma (d / 2 -
     1 + \varepsilon) . \]
  In particular, for $d \geq 3$, $\sigma (s)$ is discontinuous at $s = d
  / 2 - 1$.
\end{theorem}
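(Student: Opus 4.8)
The strip statement and the identification with the analytic continuation of $\alpha$ are already furnished by Proposition \ref{prop:sigmastrip}, so the entire task lies at the boundary point $s = d/2 - 1$. The plan is to sharpen the analysis of $\delta_N(s) = \sigma_N(s) - \sigma_{N-1}(s)$ from that proof by carrying the Taylor expansion of $f(x) = Q(n+x)^{-s}$ one order further. Integrating $f(x) = f(0) + \sum_i x_i f_i(0) + \tfrac12\sum_{i,j}x_i x_j f_{ij}(0) + \cdots$ over the symmetric cube $\|x\|_{\infty}\le 1/2$, the odd-order terms and the off-diagonal second-order terms integrate to zero, leaving
\[ \int_{\|x\|_{\infty}\le 1/2}\left[f(0) - f(x)\right]\mathd x = -\frac{1}{24}\sum_i f_{ii}(0) + O(N^{-2\op{Re} s - 4}), \]
the error arising from the fourth-order remainder (the third-order term again being odd). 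For the cubic form $Q(x) = \|x\|_2^2$ a direct computation gives the Laplacian $\sum_i f_{ii}(0) = 2s(2s+2-d)\,\|n\|_2^{-2s-2}$, so summing over the shell $\|n\|_{\infty} = N$ and then over $N$ yields
\[ \delta(s) = -\frac{s(2s+2-d)}{12}\,\alpha(s+1) + R(s), \]
where $\alpha(s+1) = Z_Q(s+1) = \sum_{n\ne 0}\|n\|_2^{-2s-2}$ and $R(s)$ collects the fourth- and higher-order contributions.

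The key point is that the Laplacian prefactor $2s + 2 - d$ vanishes precisely at $s = d/2-1$; equivalently, the kernel $\|y\|_2^{-(d-2)}$ is harmonic off the origin. Thus the second-order term is identically zero at the boundary, while $R(s)$, being $O(N^{d - 2\op{Re} s - 5}) = O(N^{-3})$ per shell near $s = d/2-1$, converges uniformly and is therefore continuous across the boundary. Two consequences follow. First, $\delta(d/2-1) = R(d/2-1)$ exists, and since $\beta_0$ is analytic at $d/2-1$, the relation \eqref{eq:deltasb} shows that $\sigma(d/2-1) = \delta(d/2-1) - \beta_0(d/2-1)$ exists. Second, the one-sided limit of the second-order term is governed by the interplay of the simple zero of $2s+2-d$ with the simple pole of $\alpha(s+1) = Z_Q(s+1)$ at $s = d/2 - 1$: writing $s = d/2 - 1 + \varepsilon$ and invoking the residue \eqref{eq:epsteinres} for the cubic form (determinant one),
\[ \lim_{\varepsilon\to 0^+}\frac{s(2s+2-d)}{12}\,\alpha(s+1) = \frac{d-2}{12}\,\frac{\pi^{d/2}}{\Gamma(d/2)} = \frac{1}{6}\,\frac{\pi^{d/2}}{\Gamma(d/2-1)}, \]
the last equality using $\Gamma(d/2) = (d/2-1)\Gamma(d/2-1)$.

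Combining these, $\lim_{\varepsilon\to 0^+}\delta(d/2-1+\varepsilon) = R(d/2-1) - \tfrac16\,\pi^{d/2}/\Gamma(d/2-1)$, whereas $\delta(d/2-1) = R(d/2-1)$; subtracting the continuous term $\beta_0$ throughout via \eqref{eq:deltasb} transfers this jump verbatim to $\sigma$, giving $\sigma(d/2-1) - \lim_{\varepsilon\to 0^+}\sigma(d/2-1+\varepsilon) = \tfrac16\,\pi^{d/2}/\Gamma(d/2-1)$. By Proposition \ref{prop:sigmastrip} the right-hand limit equals $\lim_{\varepsilon\to 0^+}\alpha(d/2-1+\varepsilon) = \alpha(d/2-1)$, since $\alpha = Z_Q$ is regular at $d/2-1$; this is the asserted identity, and positivity of the jump for $d\ge 3$ (it vanishes for $d=2$, where $\Gamma(d/2-1)=\Gamma(0)$) gives the discontinuity. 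The main obstacle is the uniform control of $R(s)$ near the boundary, i.e.\ showing the fourth- and higher-order Taylor contributions are summable uniformly in a neighborhood of $s = d/2-1$ so that $R$ is continuous there and contributes nothing to the jump; once this is secured, the entire jump arises from the clean cancellation of the zero of $2s+2-d$ against the pole of $\alpha(s+1)$.
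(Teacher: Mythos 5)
Your proposal is correct and follows the same skeleton as the paper's proof (second-order Taylor expansion of $f(x)=Q(n+x)^{-s}$ over the unit cube, the observation that the coefficient $s(d-2-2s)$ of the surviving Laplacian term vanishes exactly at $s=d/2-1$, and the jump arising as a zero-times-pole limit), but it extracts the pole by a genuinely different mechanism. The paper rescales the shell sum $\sum_{\|n\|_\infty=N}Q(n)^{-(s+1)}$ into a Riemann sum for the surface integral $V(s)$, so that summing over $N$ produces $\zeta(2s-d+3)\,V(s)$; the pole then comes from the Riemann zeta function and the residue computation requires the separate evaluation $V(d/2-1)=\tfrac1d\pi^{d/2}/\Gamma(d/2)$ via \eqref{eq:arctan1} and Lemma \ref{lem:intQdi}. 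You instead sum the shells directly into the Epstein zeta function $\alpha(s+1)=Z_Q(s+1)$ and read the jump off its known pole at $d/2$ with residue \eqref{eq:epsteinres}; the arithmetic $\tfrac{d-2}{12}\pi^{d/2}/\Gamma(d/2)=\tfrac16\pi^{d/2}/\Gamma(d/2-1)$ agrees with the paper. Your route is shorter and avoids the Riemann-sum approximation and the arctan-type integral entirely, at the cost of invoking the analytic continuation of $Z_Q$; it works here because for the cubic form the second-order term is a scalar multiple of $Q(n)^{-(s+1)}$. Note, however, that this shortcut does not extend to Theorem \ref{thm:jumpx}: for general $A$ the numerator $\op{tr}(A)Q_A(n)-2(s+1)Q_{A^2}(n)$ is not proportional to a power of $Q_A(n)$, so the resulting sums are not Epstein zeta functions and the paper's $V_Q(s)$ machinery becomes necessary. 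Your handling of the boundary value itself ($\delta_N(d/2-1)=W_N(d/2-1)$ termwise, with the remainder $O(N^{d-2\op{Re}s-5})$ after discarding the odd third-order term, hence uniformly summable near $d/2-1$) and of the transfer of the jump from $\delta$ to $\sigma$ via the analyticity of $\beta_0$ at $d/2-1$ is sound --- indeed slightly more careful than the paper's terse ``$\delta(s)=\sigma(s)-\sigma_0(s)=\sigma(s)$''.
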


\begin{proof}
  In light of Proposition \ref{prop:sigmastrip}, we only need to show the
  statement about the value of $\sigma (s)$ at $s = d / 2 - 1$.

  Let us adopt the notation used in Proposition \ref{prop:sigmastrip},
  including, in particular, the definitions of $\delta_N$ and  $f (x) := Q (n + x)^{- s}$ with $\|n\|_{\infty} = N$ and $\|x\|_{\infty}
  \leq 1 / 2$. Proceeding
  as for \eqref{eq:fx0}, we have that
  \[ f (x) - f (0) = \sum_i x_i f_{i} (0) + \frac{1}{2} \sum_{i, j} x_i x_j
     f_{i j} (0) + O (N^{- 2 \sigma - 3}). \]
  Since terms of odd order in the $x_i$ are eliminated in the subsequent
  integration, we focus on the terms $f_{i i}$. In the present case of the
  cubic lattice,
  \begin{eqnarray}\label{eq:sumfij0}
    \sum_i f_{i i} (0) & = & \sum_i \left[ \frac{4 s (s + 1)}{Q (n)^{s +
    2}} n_i^2 - \frac{2 s}{Q (n)^{s + 1}} \right]\nonumber\\
    & = & \frac{4 s (s + 1) - 2 d s}{Q (n)^{s + 1}}\nonumber\\
    & = & \frac{2 s (2 s - (d - 2))}{Q (n)^{s + 1}} .
  \end{eqnarray}
  We thus find that
  \begin{eqnarray*}
    &  & \int_{\|x\|_{\infty} \leq 1 / 2} \left[ \frac{1}{Q (n)^s} -
    \frac{1}{Q (n + x)^s} \right] \mathd x\\
    & = & - \frac{1}{2} \int_{\|x\|_{\infty} \leq 1 / 2} \left[ \sum_i
    x_i^2 f_{i i} (0) \right] \mathd x + O (N^{- 2 \sigma - 3})\\
    & = & - \frac{1}{24} \sum_i f_{ii} (0) + O (N^{- 2 \sigma - 3}),\end{eqnarray*}
 \noindent (on integrating term-by-term). Then, on appealing to \eqref{eq:sumfij0},
    \begin{eqnarray*} \int_{\|x\|_{\infty} \leq 1 / 2} \left[ \frac{1}{Q (n)^s} -
    \frac{1}{Q (n + x)^s} \right] \mathd x& = & \frac{1}{12}  \frac{s (d - 2 - 2 s)}{Q (n)^{s + 1}} + O (N^{- 2
    \sigma - 3}) .
  \end{eqnarray*}
  Hence,
  \begin{eqnarray*}
    \delta_N (s) & = & \sum_{\|n\|_{\infty} = N} \int_{\|x\|_{\infty}
    \leq 1 / 2} \left[ \frac{1}{Q (n)^s} - \frac{1}{Q (n + x)^s} \right]
    \mathd x\\
    & = & \frac{s (d - 2 - 2 s)}{12} \sum_{\|n\|_{\infty} = N} \frac{1}{Q
    (n)^{s + 1}} + O (N^{d - 2 \sigma - 4})\\
    & = & \frac{s (d - 2 - 2 s)}{12 N^{2 s - d + 3}}  \frac{1}{N^{d - 1}}
    \sum_{\|n\|_{\infty} = N} \frac{1}{Q (n / N)^{s + 1}} + O (N^{d - 2 \sigma
    - 4}) .
  \end{eqnarray*}
  We now note that
  \begin{equation*}
    \frac{1}{2 d N^{d - 1}} \sum_{\|n\|_{\infty} = N} \frac{1}{Q (n /
    N)^{s + 1}}
    = V_N(s) + O (N^{- 1}),
  \end{equation*}
  where
  \begin{equation*}
    V_N(s) :=
    \frac{1}{N^{d - 1}} \sum_{- N \leq n_i < N} \frac{1}{(1 + (n_1
    / N)^2 + \cdots + (n_{d - 1} / N)^2)^{s + 1}}.
  \end{equation*}

  We first show that $V_N
  (s)$ approaches the integral
  \[ V (s) := \int_{[- 1, 1]^{d - 1}} \frac{1}{(1 + x_1^2 + \cdots + x_{d -
     1}^2)^{s + 1}} \mathd x. \]
  Indeed, this follows since, for $\op{Re} s \geq - 2$,
  \begin{eqnarray*}
    |V (s) - V_N (s) | & = & \left| \sum_{- N \leq n_i < N} \int_{n_i
    \leq N x_i \leq n_i + 1} \left[ \frac{1}{(1 + x_1^2 + \cdots +
    x_{d - 1}^2)^{s + 1}} \right. \right.\\
    &  & \left. \left. - \frac{1}{(1 + (n_1 / N)^2 + \cdots + (n_{d - 1} /
    N)^2)^{s + 1}}  \right]\,\mathd x \right|\\
    & \leq & \sum_{- N \leq n_i < N} \int_{n_i \leq N x_i
    \leq n_i + 1} (d - 1) \frac{2 |s + 1|}{N} \,\mathd x\\
    & = & \frac{2^d (d - 1) |s + 1|}{N} .
  \end{eqnarray*}
  To bound the above integrand, we used that $|x^{\lambda} - y^{\lambda} | \leq
  | \lambda | |x - y|$ when $\op{Re} \lambda \leq 1$ and $x, y
  \geq 1$ (as follows from the mean value theorem).

  Combining these estimates, we
  can thus write
  \begin{equation}
    \delta_N (s) = \frac{d}{6}  \frac{s (d - 2 - 2 s)}{N^{2 s - d + 3}} V (s)
    + W_N (s), \label{eq:deltanVW}
  \end{equation}
  where $W_N (s) = O (N^{d - 2 \sigma - 4})$. For $\sigma > d / 2 - 3 / 2$,
  the sum
  \[ W (s) \assign \sum_{N = 1}^{\infty} W_N (s) \]
  converges and, by the Weierstrass $M$-test, defines an analytic function.
  If, further, $\op{Re} s > d / 2 - 1$ then, from \eqref{eq:deltanVW}, the
  sum $\delta (s) \assign \sum_{N = 1}^{\infty} \delta_N (s)$ converges and we
  have
  \[ \delta (s) = \frac{d}{6} s (d - 2 - 2 s) \zeta (2 s - d + 3) V (s) + W
     (s) . \]
  In particular, since $\zeta (s)$ has a simple pole at $s = 1$ of residue
  $1$, we find
\begin{eqnarray}\label{eq:lim}\lim_{\varepsilon \rightarrow 0^+} \delta (d / 2 - 1 + \varepsilon) = -
     \frac{d}{6} (d / 2 - 1) V (d / 2 - 1) + W (d / 2 - 1) . \end{eqnarray}
  On the other hand, it follows from \eqref{eq:deltanVW} that $\delta_N (d / 2
  - 1) = W_N (d / 2 - 1)$.

  Hence, the defining series for $\delta (s)$ also
  converges when $s = d / 2 - 1$ and we obtain
\begin{eqnarray}\label{eq:sigv} \delta (d / 2 - 1) = W (d / 2 - 1) . \end{eqnarray}
  Using the consequence \eqref{eq:arctan1} of Lemma \ref{lem:intQdi}, we have
  \[ \frac{d}{6} (d / 2 - 1) V (d / 2 - 1) = \frac{1}{6}  \frac{\pi^{d /
     2}}{\Gamma (d / 2 - 1)} . \]
  Since, by construction, $\delta (s) = \sigma (s) - \sigma_0 (s) = \sigma
  (s)$, on comparing \eqref{eq:lim} and \eqref{eq:sigv} we are done.
\end{proof}

\begin{example}[Explicit  evaluations in even dimensions]
  In the case of cubic lattice sums and small even dimension, the value
  $\sigma (d / 2 - 1)$, at the jump discontinuity, can be given explicitly by combining Theorem
  \ref{thm:jump} and the closed forms for the corresponding Epstein zeta
  function, recalled in Example \ref{eg:cubicexact} below.
  Let $Q_d(x) = x_1^2+\ldots+x_d^2$.
  \begin{eqnarray*}
    \sigma_{Q_2} (0) & = & \alpha_{Q_2} (0) = - 1,\\
    \sigma_{Q_4} (1) & = & \frac{\pi^2}{6} + \alpha_{Q_4} (1) = \frac{\pi^2}{6} - 8
    \log 2,\\
    \sigma_{Q_6} (2) & = & \frac{\pi^3}{6} + \alpha_{Q_6} (2) = \frac{\pi^3}{6} -
    \frac{\pi^2}{3} - 8 G,\\
    \sigma_{Q_8} (3) & = & \frac{\pi^4}{12} + \alpha_{Q_8} (3) = \frac{\pi^4}{12} - 8
    \zeta (3),\\
    \sigma_{Q_{24}} (11) & = & \frac{\pi^{12}}{6 \cdot 10!} + \alpha_{Q_{24}} (11) =
    \frac{\pi^{12}}{6 \cdot 10!} - \frac{8}{691} \zeta (11) +
    \frac{271435}{5528} L_\Delta (11) .
  \end{eqnarray*}
  Here, $G = \sum_{n = 1}^{\infty} \chi_{- 4} (n) / n^2$ denotes Catalan's
  constant, and $L_\Delta$ is (the analytic continuation of) $L_{\Delta}(s) =
  \sum_{n=1}^\infty \tau(n) / n^s$ with $\tau(n)$ Ramanujan's $\tau$-function.
  A few properties of this remarkable function are commented on in Example
  \ref{eg:cubicexact}. We note that we have used the appropriate reflection formulas to simplify these evaluations.

  We note that the above values mix numbers of different `order', such as
  $\pi^4$ and $\zeta (3)$ which have order $4$ and $3$, respectively.  This may
  be another argument to use $\alpha(d/2-1)$ as the `value' of the Wigner limit even
  when the limit $\sigma(d/2-1)$ itself converges.
  \qede
\end{example}

We now extend Theorem \ref{thm:jump} to arbitrary definite quadratic forms.
For the most part, the proof is a natural extension of the proof of Theorem \ref{thm:jump}.
For the convenience of the reader, we duplicate some parts, as well as the
overall structure, of the previous proof.

As in \eqref{eq:Q}, let $Q = Q_A$ be the positive definite quadratic form
associated to the symmetric matrix $A$. Set
also $B(s) := \op{tr} (A) A - 2 (s + 1) A^2$.  Finally, define
\begin{align}\label{eq:vq}
  V (s) := V_Q(s) := \int_{\|x\|_{\infty} = 1} \frac{Q_{B(s)}
    (x)}{Q_A (x)^{s + 2}} \mathd \lambda_{d - 1},
\end{align}
with $\lambda_{d - 1}$ the induced $(d-1)$-dimensional measure as in \eqref{eq:intFs}.

\begin{theorem}[General jump discontinuity]
  \label{thm:jumpx}Let $Q$ be an arbitrary positive definite quadratic form.
  Then the corresponding limit $\sigma (s) \assign \lim_{N \rightarrow \infty} \sigma_N (s)$
  exists in the strip $d / 2 - 1 < \op{Re} s < d / 2$ and for $s = d / 2 -
  1$. In the strip, $\sigma (s)$ coincides with the analytic continuation of
  $\alpha (s)$. On the other hand,
  \begin{equation}\label{eq:jumpx}
    \sigma (d / 2 - 1) + \frac{d / 2 - 1}{24} V'_Q (d / 2 - 1) = \alpha (d / 2
    - 1) = \lim_{\varepsilon \rightarrow 0^+} \sigma (d / 2 - 1 +
    \varepsilon),
  \end{equation}
  with $V_Q$ as introduced in equation \eqref{eq:vq}.
\end{theorem}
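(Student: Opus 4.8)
The plan is to mirror the proof of Theorem \ref{thm:jump}, retaining the notation of Proposition \ref{prop:sigmastrip}: I write $\delta_N (s) = \sigma_N (s) - \sigma_{N - 1} (s)$ and $f (x) \assign Q (n + x)^{- s}$ with $\|n\|_{\infty} = N$, and Taylor-expand $f$ to second order. The first-order terms are odd and integrate to zero over the cube $\|x\|_{\infty} \le 1 / 2$, and the off-diagonal Hessian entries $f_{i j} (0)$, $i \neq j$, also integrate to zero, so only $\sum_i f_{i i} (0)$ survives. The new ingredient relative to the cubic case is the evaluation of this Hessian trace for a general form. Using $f_{i j} (0) = 4 s (s + 1) Q (n)^{- s - 2} (A n)_i (A n)_j - 2 s\, a_{i j} Q (n)^{- s - 1}$ together with $\sum_i (A n)_i^2 = Q_{A^2} (n)$ and $\sum_i a_{i i} = \op{tr} (A)$, I obtain
\[ \sum_i f_{i i} (0) = \frac{2 s \bigl( 2 (s + 1) Q_{A^2} (n) - \op{tr} (A)\, Q_A (n) \bigr)}{Q_A (n)^{s + 2}} = - \frac{2 s\, Q_{B (s)} (n)}{Q_A (n)^{s + 2}}, \]
which is exactly where the matrix $B (s) = \op{tr} (A) A - 2 (s + 1) A^2$ enters. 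Integrating $\tfrac{1}{24}$ of this over the unit cube then gives
\[ \int_{\|x\|_{\infty} \le 1 / 2} \Bigl[ \tfrac{1}{Q (n)^s} - \tfrac{1}{Q (n + x)^s} \Bigr] \mathd x = \frac{s\, Q_{B (s)} (n)}{12\, Q_A (n)^{s + 2}} + O (N^{- 2 \sigma - 3}). \]

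Next I sum over the shell $\|n\|_{\infty} = N$ and rescale by $N$. Since $Q_{B (s)} (n) / Q_A (n)^{s + 2} = N^{- 2 s - 2} Q_{B (s)} (n / N) / Q_A (n / N)^{s + 2}$, the key step is to recognize
\[ \frac{1}{N^{d - 1}} \sum_{\|n\|_{\infty} = N} \frac{Q_{B (s)} (n / N)}{Q_A (n / N)^{s + 2}} = V_Q (s) + O (N^{- 1}) \]
as a Riemann sum for the boundary integral $V_Q (s)$ of \eqref{eq:vq}: the points $n / N$ with $\|n\|_{\infty} = N$ sample the $2 d$ faces of $\{ \|x\|_{\infty} = 1 \}$ on a grid of mesh $1 / N$, each contributing a cell of $(d - 1)$-measure $N^{- (d - 1)}$, while the lower-dimensional edge overlaps are negligible. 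The uniform $O (N^{- 1})$ error follows from the mean value theorem exactly as in the cubic argument, using that $Q_A$ is bounded below on the compact surface. This produces the decomposition
\[ \delta_N (s) = \frac{s}{12}\, N^{- (2 s - d + 3)} V_Q (s) + W_N (s), \qquad W_N (s) = O (N^{d - 2 \sigma - 4}), \]
so that $W (s) = \sum_N W_N (s)$ is analytic for $\op{Re} s > d / 2 - 3 / 2$ by the Weierstrass $M$-test, and for $\op{Re} s > d / 2 - 1$ one has $\delta (s) = \tfrac{s}{12}\, \zeta (2 s - d + 3)\, V_Q (s) + W (s)$.

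The crux is then that the prefactor $V_Q$ vanishes at the boundary $s_0 \assign d / 2 - 1$, which is precisely the promised role of the case $B = A^2$ in Lemma \ref{lem:intQAB}. Indeed, at $s_0$ the exponent is $s_0 + 2 = 1 + d / 2$, so \eqref{eq:intQAB} applies with $B = B (s_0) = \op{tr} (A) A - d A^2$ and shows that $V_Q (s_0)$ is a nonzero multiple of $\op{tr} (B (s_0) A^{- 1}) = d\, \op{tr} (A) - d\, \op{tr} (A) = 0$. This simple zero cancels the simple pole of $\zeta (2 s - d + 3)$ at $s_0$ (which has residue $1 / 2$ in $s$), so the limit from within the strip is finite,
\[ \lim_{\varepsilon \rightarrow 0^+} \delta (d / 2 - 1 + \varepsilon) = \frac{d / 2 - 1}{24}\, V_Q' (d / 2 - 1) + W (d / 2 - 1), \]
whereas at $s = s_0$ the series $\sum_N \delta_N (s_0)$ still converges — because the harmonic tail $\sum_N N^{- 1}$ is multiplied by $V_Q (s_0) = 0$ — giving the value $\delta (d / 2 - 1) = W (d / 2 - 1)$. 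Subtracting, the jump of $\delta$ at $s_0$ equals $\tfrac{d / 2 - 1}{24} V_Q' (d / 2 - 1)$. Since $\delta (s)$ and $\sigma (s)$ differ by $\sigma_0 (s) = - \beta_0 (s)$, which is analytic at $s_0$ (its only pole being at $d / 2$), the same jump holds for $\sigma$; combined with $\lim_{\varepsilon \rightarrow 0^+} \sigma (s_0 + \varepsilon) = \alpha (s_0)$ from Proposition \ref{prop:sigmastrip}, this yields \eqref{eq:jumpx}.

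I expect the main obstacle to be the Riemann-sum step. Unlike the cubic case, the form $Q$ carries no cube symmetry, so there is no reduction to a single face, and one must estimate the full boundary sum uniformly in $s$ over a bounded region and control the edge and corner overlaps. By contrast, the vanishing $V_Q (d / 2 - 1) = 0$, though conceptually the heart of the matter since it is what localizes and produces the jump, is immediate from Lemma \ref{lem:intQAB}, and the $\zeta$-pole mechanism is identical to that in the proof of Theorem \ref{thm:jump}.
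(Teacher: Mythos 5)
Your proposal follows essentially the same route as the paper's proof: the same second-order Taylor expansion and Hessian-trace computation leading to $Q_{B(s)}$, the same Riemann-sum identification of $V_Q(s)$ on the boundary $\|x\|_\infty=1$, the vanishing $V_Q(d/2-1)=0$ via Lemma \ref{lem:intQAB} with $B=B(d/2-1)$, and the same cancellation of the simple pole of $\zeta(2s-d+3)$ producing the jump $\tfrac{d/2-1}{24}V_Q'(d/2-1)$. The argument is correct, and your explicit observation that $\delta$ and $\sigma$ differ by $\sigma_0=-\beta_0$, which is analytic at $s=d/2-1$ and hence does not affect the jump, is if anything slightly more careful than the paper's terse concluding step.
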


\begin{proof}
  In light of Proposition \ref{prop:sigmastrip}, we
  only need to show the statement about the value of $\sigma (s)$ at $s = d /
  2 - 1$.

  Let us adopt the notation used in Proposition \ref{prop:sigmastrip},
  including, in particular, the definitions of $\delta_N$ and $f (x) \assign Q
  (n + x)^{- s}$ with $\|n\|_{\infty} = N$ and $\|x\|_{\infty} \leq 1 /
  2$. Proceeding as for \eqref{eq:fx0}, we have that
  \[ f (x) - f (0) = \sum_i x_i f_i (0) + \frac{1}{2} \sum_{i, j} x_i x_j f_{i
     j} (0) + O (N^{- 2 \sigma - 3}). \]
  Since terms of odd order in the $x_i$ are eliminated in the subsequent
  integration, we focus on the terms $f_{i i} (0)$, which are given by
  \[ f_{i i} (0) = \frac{4 s (s + 1)}{Q (n)^{s + 2}} \left[ \sum_{k = 1}^d
     a_{i k} n_k \right]^2 - \frac{2 a_{i i} s}{Q (n)^{s + 1}} . \]
  Hence, equation \eqref{eq:sumfij0} generalizes to
  \begin{eqnarray}
    \sum_{i = 1}^d f_{i i} (0) & = & \frac{4 s (s + 1)}{Q (n)^{s + 2}} \sum_{1
    \leq k, l \leq d} \left( \sum_{i = 1}^d a_{k i} a_{i l} \right)
    n_k n_l - 2 s \frac{\op{tr} (A)}{Q (n)^{s + 1}} \nonumber\\
    & = & \frac{4 s (s + 1)}{Q (n)^{s + 2}} Q_{A^2} (n) - 2 s \frac{\op{tr}
    (A)}{Q (n)^{s + 1}} \nonumber\\
    & = & \frac{2 s}{Q (n)^{s + 2}}  \left[ 2 (s + 1) Q_{A^2} (n) - \op{tr}
    (A) Q (n) \right] .  \label{eq:sumfij0x}
  \end{eqnarray}
  We thus find, on integrating term-by-term, that
  \begin{eqnarray*}
    &  & \int_{\|x\|_{\infty} \leq 1 / 2} \left[ \frac{1}{Q (n)^s} -
    \frac{1}{Q (n + x)^s} \right] \mathd x\\
    & = & - \frac{1}{2} \int_{\|x\|_{\infty} \leq 1 / 2} \left[ \sum_{i
    = 1}^d x_i^2 f_{i i} (0) \right] \mathd x + O (N^{- 2 \sigma - 3})\\
    & = & - \frac{1}{24} \sum_{i = 1}^d f_{i i} (0) + O (N^{- 2 \sigma -
    3})\\
    & = & \frac{s}{12}  \frac{\op{tr} (A) Q (n) - 2 (s + 1) Q_{A^2} (n)}{Q
    (n)^{s + 2}} + O (N^{- 2 \sigma - 3}) .
  \end{eqnarray*}
  In the final step, we appealed to \eqref{eq:sumfij0x}. Hence,
  \begin{eqnarray*}
    \delta_N (s) & = & \sum_{\|n\|_{\infty} = N} \int_{\|x\|_{\infty}
    \leq 1 / 2} \left[ \frac{1}{Q (n)^s} - \frac{1}{Q (n + x)^s} \right]
    \mathd x\\
    & = & \frac{s}{12}  \sum_{\|n\|_{\infty} = N} \frac{\op{tr} (A) Q (n) -
    2 (s + 1) Q_{A^2} (n)}{Q (n)^{s + 2}} + O (N^{d - 2 \sigma - 4})\\
    & = & \frac{s}{12 N^{2 s + 2}}  \sum_{\|n\|_{\infty} = N} \frac{\op{tr}
    (A) Q (n / N) - 2 (s + 1) Q_{A^2} (n / N)}{Q (n / N)^{s + 2}} + O (N^{d -
    2 \sigma - 4}) .
  \end{eqnarray*}
  Consider, as defined above, $B(s) = \op{tr} (A) A - 2 (s + 1) A^2$. As in the proof of Theorem
  \ref{thm:jump}, one obtains that, for $\op{Re} s \geq - 2$,
  \begin{align*}
   V (s) = \int_{\|x\|_{\infty} = 1} \frac{Q_{B(s)} (x)}{Q_A (x)^{s + 2}}
     \mathd \lambda_{d - 1} = \frac{1}{N^{d - 1}} \sum_{\|n\|_{\infty} = N}
     \frac{Q_{B(s)} (n / N)}{Q_A (n / N)^{s + 2}} + O (N^{- 1}),
  \end{align*}
  with $\lambda_{d - 1}$ as in \eqref{eq:intFs}. Combining these, we can thus
  write
  \begin{equation}
    \delta_N (s) = s \frac{V (s)}{12 N^{2 s - d + 3}} + W_N (s),
    \label{eq:deltanVWx}
  \end{equation}
  where $W_N (s) = O (N^{d - 2 \sigma - 4})$. For $\sigma > d / 2 - 3 / 2$,
  the sum
  \[ W (s) \assign \sum_{N = 1}^{\infty} W_N (s) \]
  converges and, by the Weierstrass $M$-test, defines an analytic function.
  If, further, $\op{Re} s > d / 2 - 1$ then, from \eqref{eq:deltanVWx}, the
  sum $\delta (s) \assign \sum_{N = 1}^{\infty} \delta_N (s)$ converges and we
  have
  \begin{equation}
    \delta (s) = \frac{s V (s)}{12} \zeta (2 s - d + 3) + W (s) .
    \label{eq:deltaVWx}
  \end{equation}
  Since
  \[ \op{tr} (B(s) A^{- 1}) = \op{tr} (\op{tr} (A) I - 2 (s + 1) A) = (d -
     2 (s + 1)) \op{tr} (A), \]
  Lemma \ref{lem:intQAB} shows that
  \begin{equation}
    V (d / 2 - 1) = 0. \label{eq:V0}
  \end{equation}
  Using that $\zeta (s)$ has a simple pole at $s = 1$ of residue $1$, we thus
  deduce from \eqref{eq:deltaVWx} that
  \begin{equation}\label{eq:deltapx}
    \lim_{\varepsilon \rightarrow 0^+} \delta (d / 2 - 1 + \varepsilon) =
    \frac{d / 2 - 1}{24} V' (d / 2 - 1) + W (d / 2 - 1) .
  \end{equation}
  On the other hand, \eqref{eq:deltanVWx} together with \eqref{eq:V0} implies
  $\delta_N (d / 2 - 1) = W_N (d / 2 - 1)$. Hence, the defining series for
  $\delta (s)$ also converges when $s = d / 2 - 1$ and we obtain
  \[ \delta (d / 2 - 1) = W (d / 2 - 1) . \]
  The claim follows on comparison with \eqref{eq:deltapx}.
\end{proof}

\subsection{The behaviour of $V_Q'(d/2-1)$}

We now examine the nature of $V_Q'(d/2-1)$ in somewhat more detail.
Specifically, we are interested in the following question:
\begin{problem}\label{prob:VQD}
  Let $d>1$. Are there positive definite quadratic forms $Q$ on $\RR^d$ such that
  $V_Q'(d/2-1) = 0$?
\end{problem}
Recall that, in light of Theorem \ref{thm:jumpx}, if $V_Q'(d/2-1) \ne 0$ for a
quadratic form $Q$ on $\RR^d$, with $d>2$, then the corresponding Wigner limit
$\sigma_Q(s)$ exhibits a jump discontinuity at $s=d/2-1$.
In fact, in all the cases of $Q$, that we consider in this section, including
the cubic lattice case, we find that $V_{Q}'(d/2-1) < 0$, which leads us to
speculate whether this inequality holds in general.

From the definition \eqref{eq:vq} we obtain that
\begin{align}\label{eq:VD0}
  V_Q'(d/2-1) &=
  \int_{\|x\|_{\infty} = 1} \frac{-2 Q_{A^2}(x)}
    {Q_A (x)^{d/2 + 1}} \mathd \lambda_{d - 1} \nonumber\\
  &- \int_{\|x\|_{\infty} = 1} \frac{\op{tr}(A) Q_A(x) - d Q_{A^2}(x)}
    {Q_A (x)^{d/2 + 1}} \log Q_A(x) \mathd \lambda_{d - 1} \\
  &= -\frac{4\op{tr} (A)}{d\sqrt{\det (A)}}
    \frac{\pi^{d / 2}}{\Gamma (d / 2)} \nonumber\\
  &- \int_{\|x\|_{\infty} = 1} \frac{\op{tr}(A) Q_A(x) - d Q_{A^2}(x)}
    {Q_A (x)^{d/2 + 1}} \log Q_A(x) \mathd \lambda_{d - 1}. \label{eq:VD}
\end{align}
The last equality is a useful consequence of Lemma \ref{lem:intQAB}.

We also have that
\begin{align}\label{eq:scale}
  V'_{\lambda Q} (d / 2 - 1) = \lambda^{- (d / 2 - 1)} V_Q' (d / 2 -1) .
\end{align}
Indeed, it follows directly from the definition (\ref{eq:vq}) that, for
$\lambda > 0$, $V_{\lambda Q} (s) = \lambda^{- s} V_Q (s)$ and hence, by
(\ref{eq:V0}), that \eqref{eq:scale} holds.
The rescaling result \eqref{eq:scale} shows that scaling $Q$ does not change
the sign of $V_Q' (d / 2 - 1)$.
Also note that both integrals in (\ref{eq:VD0}) scale in the same way; that
this is true for the integral involving the logarithm is equivalent to
\[ \int_{\|x\|_{\infty} = 1} \frac{\op{tr} (A) Q_A (x) - d Q_{A^2}
   (x)}{Q_A (x)^{d / 2 + 1}} \mathd \lambda_{d - 1} = 0, \]
which follows from Lemma \ref{lem:intQAB}.

\begin{example}[Recovery of cubic jump]
  Let us demonstrate that Theorem \ref{thm:jumpx} reduces to Theorem
  \ref{thm:jump} in the cubic lattice case. In that case, $A = I$ and $\op{tr}(A) =d$,
  so that the integral in \eqref{eq:VD}, involving the logarithm, vanishes.
  Hence,
  \[ V' (d / 2 - 1) = - 4 \frac{\pi^{d / 2}}{\Gamma (d / 2)}, \]
  in agreement with the value given in Theorem \ref{thm:jump}.
  \qede
\end{example}

We now give a simple criterion that $V_Q' (d / 2 - 1) < 0$ for certain $Q =
Q_A$.  Suppose that there is some $\lambda>0$ such that, for all $x$ with
$\|x\|_{\infty} = 1$,
\begin{equation}
  2 Q_{A^2} (x) \geq \left[ d Q_{A^2} (x) - \op{tr} (A) Q_A (x) \right]
\, \log Q_{\lambda A} (x). \label{eq:Qineq}
\end{equation}
It then follows from \eqref{eq:VD0} that $V_Q' (d / 2 - 1) \leq 0$. To see that,
in fact, $V_Q' (d / 2 - 1) < 0$, we note that \eqref{eq:Qineq} cannot be an
equality for all $x$, because $d Q_{A^2} (x) - \op{tr} (A) Q_A (x)$ does not vanish
identically unless $A$ is a multiple of the identity matrix (which
corresponds to the cubic case, for which we know the explicit values from
Theorem \ref{thm:jump}). In the non-cubic case, the right-hand side thus is a
nonzero polynomial times the logarithm of a nonconstant polynomial, while the
left-hand side is a polynomial.

Since $\log Q_{\lambda A}(x) = \log\lambda + \log Q_A(x)$, a $\lambda>0$ satisfying \eqref{eq:Qineq}
certainly exists if the sign of $d Q_{A^2} (x) - \op{tr} (A) Q_A (x)$ is
constant for all $x$ with $\|x\|_{\infty} = 1$. We have thus proved the
following result.

\begin{proposition}
  \label{prop:Qineqc}Let $Q$ be a positive definite quadratic form on $\RR^d$ such that
  \begin{equation}
    d \,Q_{A^2} (x) \leq \op{tr} (A) Q_A (x) \label{eq:Qineqc}
  \end{equation}
  for all $x$ with $\|x\|_{\infty} = 1$. Then $V_Q' (d / 2 - 1) < 0$. The same
  conclusion holds if `$\leq$' is replaced with `$\geq$' in
  (\ref{eq:Qineqc}).
\end{proposition}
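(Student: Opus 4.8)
The plan is to read the result directly off the closed form for $V_Q'(d/2-1)$ recorded in \eqref{eq:VD0}, together with the scaling identity \eqref{eq:scale}. First I would combine the two integrals in \eqref{eq:VD0} into one, writing $P(x) := d\, Q_{A^2}(x) - \op{tr}(A) Q_A(x)$, so that
\[ V_Q'(d/2-1) = -\int_{\|x\|_{\infty}=1} \frac{2 Q_{A^2}(x) - P(x) \log Q_A(x)}{Q_A(x)^{d/2+1}} \,\mathd \lambda_{d-1}. \]
Since the denominator $Q_A(x)^{d/2+1}$ is continuous and strictly positive on the compact set $K := \{x : \|x\|_{\infty}=1\}$, the sign of $V_Q'(d/2-1)$ is opposite to the sign of the integral of the numerator over $K$. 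Thus it suffices to show that the numerator $2Q_{A^2}(x) - P(x)\log Q_A(x)$ is nonnegative on $K$, and strictly positive on a nonempty subset.

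The key device is the scale invariance \eqref{eq:scale}: because $V_{\lambda Q}'(d/2-1) = \lambda^{-(d/2-1)} V_Q'(d/2-1)$ for $\lambda>0$, the sign of $V_Q'(d/2-1)$ is unaffected by replacing $A$ with $\lambda A$. Concretely, using $\log Q_{\lambda A}(x) = \log\lambda + \log Q_A(x)$, I may replace $\log Q_A$ by $\log Q_{\lambda A}$ in the numerator above at the cost of an overall positive factor; this is exactly the criterion \eqref{eq:Qineq}, whose validity for some $\lambda>0$ already forces $V_Q'(d/2-1)\leq 0$. To produce such a $\lambda$ under the hypothesis \eqref{eq:Qineqc}, note that \eqref{eq:Qineqc} says $P\leq 0$ on $K$. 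Since $Q_A$ is continuous and strictly positive on $K$, I choose $\lambda$ so large that $Q_{\lambda A}(x)=\lambda Q_A(x)\geq 1$, i.e. $\log Q_{\lambda A}(x)\geq 0$, for every $x\in K$; then $2Q_{A^2}(x) - P(x)\log Q_{\lambda A}(x)\geq 2Q_{A^2}(x)>0$, because $-P\geq0$ and $Q_{A^2}>0$ on $K$. The `$\geq$' variant is symmetric: one instead takes $\lambda$ small enough that $\log Q_{\lambda A}\leq 0$ on $K$. Either way \eqref{eq:Qineq} holds, giving $V_Q'(d/2-1)\leq 0$.

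For the strict inequality I would argue, as indicated just before the statement, that the numerator cannot vanish identically on $K$: if $A$ is not a scalar multiple of the identity then $P$ is a nonzero polynomial, so $P\log Q_{\lambda A}$ (a nonzero polynomial times the logarithm of a nonconstant polynomial) cannot cancel the polynomial $2Q_{A^2}$ identically; and if $A$ is a scalar multiple of the identity then $P\equiv 0$ and the numerator is simply $2Q_{A^2}>0$. By continuity the numerator is then strictly positive on a nonempty open subset of $K$, so the integral is strictly positive and $V_Q'(d/2-1)<0$. The step I expect to be most delicate is signing the logarithmic integral in \eqref{eq:VD0}: on its own $\log Q_A$ changes sign across $K$, so that integral cannot be controlled termwise. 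The scaling freedom \eqref{eq:scale} is precisely what lets me add an arbitrary constant to $\log Q_A$ and thereby pin down the sign, and this maneuver succeeds exactly when $P$ has constant sign, which is the content of the hypothesis \eqref{eq:Qineqc}.
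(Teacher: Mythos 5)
Your proof is correct and follows essentially the same route as the paper: the closed form \eqref{eq:VD0}, the scaling freedom \eqref{eq:scale} (equivalently, the vanishing of $\int_{\|x\|_\infty=1} (\op{tr}(A)Q_A - dQ_{A^2})Q_A^{-d/2-1}\,\mathd\lambda_{d-1}$) to shift $\log Q_A$ by a constant, and the constant sign of $d\,Q_{A^2}-\op{tr}(A)Q_A$ to choose $\lambda$ so that criterion \eqref{eq:Qineq} holds. Your strictness step is in fact marginally cleaner than the paper's, since after choosing $\lambda$ your integrand is bounded below pointwise by $2Q_{A^2}(x)/Q_A(x)^{d/2+1}>0$ on the compact surface, which makes the paper's separate polynomial-versus-logarithm non-degeneracy argument (and its special-casing of $A$ a multiple of the identity) unnecessary.
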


\begin{example}[Some non-cubic lattices]
  Consider the case when $A$ is given by $A_p \assign I - p E$, where $E$ is the matrix
  with all entries equal to $1$. One easily checks that $A_p$ is positive
  definite if and only if $p < 1 / d$. Hence, we assume $p < 1 / d$. We
  further observe that
  \[ Q_{A_p} (x) = \|x\|_2^2 - p \left( \sum_{j = 1}^d x_j \right)^2, \]
  as well as $A_p^2 = A_{p (2 - d p)}$. Thus equipped, a brief calculation
  reveals that
  \[ d Q_{A_p^2} (x) - \op{tr} (A_p) Q_{A_p} (x) = p d \|x\|_2^2 - p \left[
     1 - (d - 1) p \right] \left( \sum_{j = 1}^d x_j \right)^2 . \]
  Notice that, by H\"older's inequality,
  \[ \left( \sum_{j = 1}^d x_j \right)^2 \leq \|x\|_1^2 \leq d
     \|x\|_{\infty} \|x\|_2^2 . \]
  Assume further that $p \geq 0$, so that $p \left[ 1 - (d - 1) p \right]
  > 0$. We then find that, for all $x$ with $\|x\|_{\infty} = 1$,
  \[ d Q_{A_p^2} (x) - \op{tr} (A_p) Q_{A_p} (x) \geq p^2 d (d - 1)
     \|x\|_2^2 \geq 0. \]
  By Proposition \ref{prop:Qineqc}, we have thus shown that $V_Q' (d / 2 - 1)
  < 0$, with $Q = Q_{A_p}$, for all $0 \leq p < 1 / d$.
  \qede
\end{example}

Continuing in this vein, we explicitly determine $V_Q'(d/2-1)$ for some
very simple binary forms.

\begin{example}
  To indicate the nature of the quantities $V_Q (s)$ and, in consequence,
  $V_Q' (d / 2 - 1)$, let us consider the very basic case of $Q (x_1, x_2) := a
  x_1^2 + b x_2^2$, with $a, b > 0$, (of course, the factor $d / 2 - 1$ in
  (\ref{eq:jumpx}) vanishes in this case, so the contribution of $V'_Q (d
  / 2 - 1)$ is not, in the end, brought to bear). We have
  \begin{eqnarray*}
    V_Q (s) & = & \int_{\|x\|_{\infty} = 1} \frac{(a b - (2 s + 1) a^2) x_1^2
    + (a b - (2 s + 1) b^2) x_2^2}{(a x_1^2 + b x_2^2)^{s + 2}} \mathd
    \lambda_1\\
    & = & 4 \int_0^1 \frac{(a b - (2 s + 1) a^2) x_1^2 + (a b - (2 s + 1)
    b^2)}{(a x_1^2 + b)^{s + 2}} \mathd x_1\\
    &  & + 4 \int_0^1 \frac{(a b - (2 s + 1) a^2) + (a b - (2 s + 1) b^2)
    x_2^2}{(a + b x_2^2)^{s + 2}} \mathd x_2.
  \end{eqnarray*}
  Using the basic integral
  \begin{equation}
    \int_0^1 \frac{1}{(a x^2 + b)^s} \mathd x = \pFq21{1 / 2, s}{3 / 2}{- a},
    \label{eq:intab2F1}
  \end{equation}
  and some standard hypergeometric manipulations, we thus find
  \begin{equation}
    V_Q (s) = \frac{- 8 s}{(a + b)^s} \left[ \pFq21{1, 1 / 2 - s}{3 / 2}{-
      \frac{a}{b}} + \pFq21{1, 1 / 2 - s}{3 / 2}{ - \frac{b}{a}} \right] .
  \end{equation}
  The factor of $s$ in $V_Q (s)$, together with the elementary special case $s
  = 1$ of (\ref{eq:intab2F1}), allows us to conclude that
  \begin{equation*}
    V_Q' (0) = - 8 \left[ \sqrt{\frac{b}{a}} \arctan \sqrt{\frac{a}{b}} +
      \sqrt{\frac{a}{b}} \arctan \sqrt{\frac{b}{a}} \right] .
  \end{equation*}
  In particular, we observe that $V_Q' (d / 2 - 1) < 0$, though Proposition
  \ref{prop:Qineqc} does not apply in the present case.
  \qede
\end{example}

Sadly, as illustrated by this example, Proposition \ref{prop:Qineqc} is not
always accessible. Indeed, it can fail quite comprehensively.

\begin{example}[Some scaled cubic lattices]
  Consider the case when $A$ is given by $A_p \assign I + p D(a)$, where $D(a)=D(a_1,\ldots,a_d)$ is a diagonal matrix
  and, without loss, $p \geq 0$. The matrix $A_p$ is positive
  definite if and only if $p a_k+1 >0$ for all $1 \leq k \leq d$.

  Suppose that $\op{tr} (D(a))=0$, so that $\op{tr} (A_p)=d$. Also $A_p^2 =
  I+2pD(a) +p^2D(a_1^2,\ldots,a_k^2)$. Thence,
  \[ d Q_{A_p^2} (x) - \op{tr} (A_p) Q_{A_p} (x) = pd \sum_{k=1}^d a_k(1+pa_k)x_k^2, \]
  which must change signs on the sphere, since the $a_k$ vary in sign, and so Proposition
  \ref{prop:Qineqc} does not apply.
 \qede
\end{example}

We conclude this section with a comment on the behaviour of $\sigma(s)$ at the
other side of the strip of convergence, that is, as $s \to d/2$.

\begin{remark}[$\sigma(s)$ as $s\to d/2$]\label{rem:spole}
  From Proposition \ref{prop:betares} and the fact that $\alpha_N(s)$ is an
  entire function, we know that $\sigma_N(s)$ has a simple pole at $s=d/2$ with
  the same residue as $\alpha(s)$ (which, by Proposition \ref{prop:sigmastrip},
  is the analytic continuation of the limit $\sigma(s)$).
  \qede
\end{remark}

\section{Alternative procedures for Wigner limits}\label{sec:alt}

The limit $\sigma (s) = \lim_{N \rightarrow \infty} \left[ \alpha_N (s) -
\beta_N (s) \right]$, considered in the preceding sections, is built from the
sum $\alpha_N (s)$, which sums over the lattice points in the hypercube $\{x
\in \RR^d : \|x\|_{\infty} \leq N\}$. In this section, we will
show that some of the previous discussion carries over to the case when the
hypercubes get replaced by more general sets. For simplicity, we restrict to
the case of general hyperballs $\{x \in \RR^d : \|x\| \leq N\}$
where $\| \cdot \|$ is any norm in $\RR^d$, and consider
\begin{eqnarray}
  \widehat{\alpha}_N (s) & := & \sum_{0 < \|n\| \leq N} \frac{1}{Q (n)^s},
  \label{eq:alphah}\\
  \widehat{\beta}_N (s) &: = & \int_{\|x\| \leq N} \frac{1}{Q (x)^s} \mathd x,
  \label{eq:betah}
\end{eqnarray}
as well as $\widehat{\sigma}_N := \widehat{\alpha}_N - \widehat{\beta}_N$. Again, if
$\op{Re} s > d / 2$, then $\widehat{\alpha}_N (s)$ converges to the Epstein zeta
function $\alpha (s) = Z_Q (s)$ as $N \rightarrow \infty$.

Of particular interest is the case $\| \cdot \| = \| \cdot \|_2$, in which the
lattice sum extends over the usual Euclidean $d$-balls of radius $N$. This
case was considered in {\cite[Theorem 2]{latticesums-bbs}} when $d = 2$ and it
was shown that the limit $\widehat{\sigma} (s) \assign \lim_{N \rightarrow \infty}
\widehat{\sigma}_N (s)$ exists in the strip $1 / 3 < \op{Re} s < 1$ and
coincides therein with the analytic continuation of $\alpha (s)$. As we will
see below, this strip can be extended on the left-hand side, though not below
$1 / 4$.

In contrast to Remark \ref{rk:sigma0}, we note that $\widehat{\sigma}_N (0)$
usually does not converge. We therefore let $\lambda$ be the infimum of all
values $\ell \geq 0$ such that
\begin{equation}
  \widehat{\sigma}_N (0) =\# \{n \in \mathbbm{Z}^d : \|n\| \leq N\} -
  \op{vol} \{x \in \RR^d : \|x\| \leq N\} - 1 = O (N^{\ell}) .
  \label{eq:sigmah0}
\end{equation}

The determination of $\lambda$, especially for the $p$-norms $\| \cdot \|_p$,
is a famous problem and in several cases still open. In particular, when \ $d
= 2$ and $\| \cdot \| = \| \cdot \|_2$, this is Gauss's circle problem. For a
recent survey, we refer to {\cite{ikn-lattice04}}. A number of results on the
values of $\lambda$ are discussed in the proof of Corollary
\ref{cor:sigmahstrip2} and the remarks thereafter. We also recall the
well-known fact, due to Weierstrass, that the balls in the $p$-norm have volume
\[ \op{vol} \{x \in \RR^d : \|x\|_p \leq N\} = \frac{2^d
   \Gamma^d (1 + 1 / p)}{\Gamma (1 + d / p)} N^d . \]
We prove the following analog of Proposition \ref{prop:sigmastrip}, which
includes {\cite[Theorem 2]{latticesums-bbs}} as the special case $d = 2$ and
$\| \cdot \| = \| \cdot \|_2$.

\begin{proposition}
  \label{prop:sigmahstrip}Let $\| \cdot \|$ be a norm on $\RR^d$, and
  assume that $\lambda$ is the infimum of all values $\ell \geq 0$ such
  that \eqref{eq:sigmah0} holds. Further, let $Q$ be a positive definite
  quadratic form. Then the limit $\widehat{\sigma} (s) \assign \lim_{N \rightarrow
  \infty} \widehat{\sigma}_N (s)$ exists in the strip $\max (d / 2 - 1, \lambda /
  2) < \op{Re} s < d / 2$ and coincides therein with the analytic
  continuation of $\alpha (s)$.
\end{proposition}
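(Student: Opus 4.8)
The plan is to mirror the structure of the proof of Proposition \ref{prop:sigmastrip}, replacing the role of the hypercube shells $\{\|n\|_\infty = N\}$ by the shells between consecutive balls in the norm $\|\cdot\|$. Concretely, I would again set $\widehat{\delta}_N(s) := \widehat{\sigma}_N(s) - \widehat{\sigma}_{N-1}(s)$ and estimate its size. The difference $\widehat{\alpha}_N - \widehat{\alpha}_{N-1}$ is the sum of $Q(n)^{-s}$ over lattice points $n$ with $N-1 < \|n\| \leq N$, while $\widehat{\beta}_N - \widehat{\beta}_{N-1}$ is the integral of $Q(x)^{-s}$ over the corresponding spherical shell. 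The key new ingredient, absent in the hypercube case, is that the number of lattice points in such a shell is no longer exactly controlled term-by-term: here I would exploit the hypothesis defining $\lambda$, which says $\widehat{\sigma}_N(0) = O(N^\ell)$ for every $\ell > \lambda$. This is precisely an error estimate on the lattice-point count of a ball versus its volume, and it is what lets us bound the sum-minus-integral discrepancy on a shell.

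First I would note that, since $Q(x)^{-s}$ is essentially constant of size $O(N^{-2\op{Re}s})$ across a single shell of width $O(1)$ (by the same Taylor-expansion and mean-value estimates used in \eqref{eq:fx0}), the contribution of a shell to $\widehat{\delta}_N(s)$ can be written as $Q$-weighted against the counting-versus-volume discrepancy. More precisely, using Abel summation (partial summation) against the radial variable, I would convert the sum over lattice points in the ball of radius $N$ into an integral against $d(\widehat{\sigma}_M(0))$, so that the pointwise-in-$s$ quantity $\widehat{\sigma}_N(s) + \widehat{\beta}_0$-type object gets expressed through the discrepancy function $E(M) := \#\{\|n\| \leq M\} - \op{vol}\{\|x\| \leq M\}$, which satisfies $E(M) = O(M^{\lambda + \varepsilon})$ for every $\varepsilon > 0$. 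The radial derivative of $Q(x)^{-s}$ along the ball contributes a factor of size $M^{-2\op{Re}s - 1}$, so the tail of the Abel-summed series is dominated by $\sum_M M^{\lambda + \varepsilon} M^{-2\op{Re}s - 1}$, which converges exactly when $\op{Re}s > \lambda/2$. This is the mechanism that produces the lower boundary $\lambda/2$ in the claimed strip.

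Having established that the limit $\widehat{\sigma}(s) = \lim_N \widehat{\sigma}_N(s)$ exists (as an analytic function) for $\op{Re}s > \max(d/2 - 1, \lambda/2)$, the identification with the analytic continuation of $\alpha(s)$ proceeds exactly as in Proposition \ref{prop:sigmastrip}: I would compute $\widehat{\beta}_N(s)$ in closed form for $\op{Re}s < d/2$ as a scalar multiple of $N^{d - 2s}$ (paralleling \eqref{eq:betaN0} and \eqref{eq:intQQ}, with $\op{vol}(\mathbb{S}^{d-1})$ replaced by the appropriate surface integral for the given norm), observe that its meromorphic continuation tends to $0$ as $N \to \infty$ when $\op{Re}s > d/2$, and then match the two half-plane expressions for the analytically continued function to conclude that $\widehat{\sigma}(s)$ and $\alpha(s)$ agree on the overlap, hence on the whole strip by analytic continuation. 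The reason $\max(d/2 - 1, \lambda/2)$ rather than simply $\lambda/2$ appears is that the shell-width Taylor expansion itself (the analog of the second-order terms entering $\delta_N$ in the cubic proof) only decays fast enough to be summable once $\op{Re}s > d/2 - 1$, so both constraints must hold simultaneously.

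The main obstacle I anticipate is handling the discrepancy estimate rigorously and uniformly in $s$ over a bounded region $\Omega = \{\op{Re}s > \sigma, |s| < R\}$: the quantity $E(M)$ is only controlled in an averaged, big-$O$ sense (and for general norms the shells need not even be smooth), so I cannot simply bound each $\widehat{\delta}_N(s)$ pointwise the way the clean odd-integral cancellation does in the cubic case. The careful step is the Abel/partial summation that legitimately transfers the crude count bound $E(M) = O(M^{\lambda+\varepsilon})$ into a bound on the weighted series while retaining uniformity in $s$; this is where the Weierstrass $M$-test is then invoked, as before, to upgrade termwise convergence to analyticity of the limit. Once this uniform estimate is in hand, the remainder of the argument is a routine adaptation of Proposition \ref{prop:sigmastrip}.
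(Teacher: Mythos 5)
Your overall architecture is right --- shell-by-shell estimates producing the constraint $\op{Re} s > d/2-1$, the discrepancy hypothesis \eqref{eq:sigmah0} producing the constraint $\op{Re} s > \lambda/2$, and the identification with $\alpha(s)$ via the homogeneity relation $\widehat{\beta}_N(s) = N^{d-2s}\widehat{\beta}_1(s)$ and matching of meromorphic continuations across $\op{Re} s = d/2$; that last part is exactly what the paper does. However, the central mechanism you propose for importing the bound $E(M)=O(M^{\lambda+\varepsilon})$ --- Abel summation of $\sum_{\|n\|\le N} Q(n)^{-s}$ against $\mathd\bigl(\#\{\|n\|\le M\}-\op{vol} B_M\bigr)$ in the radial variable --- only makes sense if the summand is a function of $\|n\|$ alone. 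For a general positive definite $Q$ and a general norm $\|\cdot\|$ (say a non-cubic $Q$ with $\|\cdot\|_2$, or the cubic $Q$ with $\|\cdot\|_p$), $Q(x)^{-s}$ is \emph{not} radial with respect to $\|\cdot\|$: on the shell $N-1<\|x\|\le N$ it has the right \emph{size} $O(N^{-2\op{Re} s})$ but varies by a bounded, nonconstant angular factor. The hypothesis \eqref{eq:sigmah0} controls only the total count-minus-volume discrepancy, not the discrepancy weighted against a nonconstant function on the unit sphere of $\|\cdot\|$ (a ``sector'' version of the lattice point problem), so the partial summation cannot be performed as you describe; replacing the angular factor by its supremum only yields the trivial constraint $\op{Re} s > (d-1)/2$. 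This, rather than the uniformity in $s$ that you flag as the main obstacle, is the real difficulty.

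The paper's way around it is a decoupling that your proposal is missing: introduce the intermediate quantity $\tilde{\beta}_N(s) := \sum_{\|n\|\le N}\int_{\|x\|_\infty\le 1/2} Q(n+x)^{-s}\,\mathd x$ and split $\widehat{\sigma}_N = (\widehat{\alpha}_N - \tilde{\beta}_N) - (\widehat{\beta}_N - \tilde{\beta}_N)$. The first difference is a sum over lattice points of the \emph{same} cube-centred integrands as in Proposition \ref{prop:sigmastrip}, so it is estimated term by term via the Taylor expansion \eqref{eq:fx0} with the odd terms integrating to zero; the only input about the norm is the trivial count $O(N^{d-1})$ of lattice points in a shell, and no radial structure is required. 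The discrepancy hypothesis enters only in the second, purely geometric comparison \eqref{eq:betaht} between the ball and the union of unit cubes centred at its lattice points, and that is where the extra constraint $\op{Re} s>\lambda/2$ arises. Without some such intermediate object your $\widehat{\delta}_N$ does not even decompose into per-lattice-point terms (there is no natural pairing of the lattice points in a shell with pieces of the shell integral), so the argument as written cannot be completed.
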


\begin{proof}
  As before, we fix $\sigma > 0$ as well as $R > 0$ and set $\Omega = \{s :
  \op{Re} s > \sigma, \hspace{1em} |s| < R\}$. All order terms below are
  uniform with respect to $s$ in the bounded region $\Omega$. In order to
  proceed along the lines of Proposition \ref{prop:sigmastrip}, we introduce
  \[ \tilde{\beta}_N (s) := \sum_{\|n\| \leq N} \int_{\|x\|_{\infty}
     \leq 1 / 2} \frac{1}{Q (n + x)^s} \mathd x, \]
  and observe that, by \eqref{eq:sigmah0} and the fact that all norms on
  $\RR^d$ are equivalent,
  \begin{equation}
    \widehat{\beta}_N (s) - \tilde{\beta}_N (s) = O (N^{- 2 s + \ell})
    \label{eq:betaht}
  \end{equation}
  for all values $\ell > \lambda$. On the other hand, set $\tilde{\sigma}_N
  (s) := \widehat{\alpha}_N (s) - \tilde{\beta}_N (s)$ and let
  \begin{eqnarray*}
    \delta_N (s) & := & \tilde{\sigma}_N (s) - \tilde{\sigma}_{N - 1} (s)\\
    & = & \sum_{N - 1 < \|n\| \leq N} \int_{\|x\|_{\infty} \leq 1 /
    2} \left[ \frac{1}{Q (n)^s} - \frac{1}{Q (n + x)^s} \right] \mathd x.\\
    & = & O (N^{- 2 \sigma - 2}) \sum_{N - 1 < \|n\| \leq N} 1\\
    & = & O (N^{d - 2 \sigma - 3}),
  \end{eqnarray*}
  where the estimates follow as in the proof of Proposition
  \ref{prop:sigmastrip}.

  Again, we conclude that the series $\delta (s) :=
  \sum_{N = 1}^{\infty} \delta_N (s)$ converges in the half-plane $\op{Re} s
  > d / 2 - 1$ and defines an analytic function therein. By construction,
  \begin{equation}
    \delta (s) = \lim_{N \rightarrow \infty} \left[ \tilde{\sigma}_N (s) +
    \tilde{\beta}_0 (s) \right] . \label{eq:delta1t}
  \end{equation}
  Since $\tilde{\beta}_0 (s)$ is analytic for $\op{Re} s < d / 2$, it
  follows that the limit $\tilde{\sigma} (s) \assign \lim_{N \rightarrow
  \infty} \tilde{\sigma}_N (s)$ exists in the strip $d / 2 - 1 < \op{Re} s <
  d / 2$. In combination with \eqref{eq:betaht}, this shows that the limit
  $\widehat{\sigma} (s)$ exists in the strip $\max (d / 2 - 1, \lambda / 2) <
  \op{Re} s < d / 2$ and equals $\tilde{\sigma} (s)$ therein.

  For the second part of the claim, we proceed as in the proof of Proposition
  \ref{prop:sigmastrip} and observe that, for $\op{Re} s < d / 2$,
  \begin{equation}
    \widehat{\beta}_N (s) = N^{d - 2 s} \int_{\|x\| \leq 1} \frac{1}{Q (x)^s}
    \mathd x = N^{d - 2 s} \widehat{\beta}_1 (s) . \label{eq:betahN1}
  \end{equation}
  We note that Proposition \ref{prop:betares}, with the same proof, also
  applies to $\widehat{\beta}_N$ in place of $\beta_N$. In particular,
  $\widehat{\beta}_N$ and $\widehat{\beta}_0$ have meromorphic continuations to the
  entire complex plane, and the relation induced by \eqref{eq:betahN1} continues to hold.
  For $\op{Re} s > d / 2$,
  \[ \lim_{N \rightarrow \infty} \widehat{\beta}_N (s) = \lim_{N \rightarrow
     \infty} N^{d - 2 s} \widehat{\beta}_1 (s) = 0. \]
  We therefore have, for $\op{Re} s > d / 2$,
  \begin{equation}
    \delta (s) = \lim_{N \rightarrow \infty} \left[ \widehat{\alpha}_N (s) -
    \widehat{\beta}_N (s) + \tilde{\beta}_0 (s) \right] = \alpha (s) +
    \tilde{\beta}_0 (s) . \label{eq:deltaabh}
  \end{equation}
  On the other hand, it follows from \eqref{eq:betaht} and \eqref{eq:delta1t}
  that, for $\op{Re} s < d / 2$,
  \begin{equation}
    \delta (s) = \widehat{\sigma} (s) + \tilde{\beta}_0 (s) . \label{eq:deltasbh}
  \end{equation}
  Since both $\delta (s)$ and $\tilde{\beta}_0 (s)$ are meromorphic in the
  half-plane $\op{Re} s > d / 2 - 1$, comparing \eqref{eq:deltaabh} and
  \eqref{eq:deltasbh} proves that the analytic continuations of $\widehat{\sigma}
  (s)$ and $\alpha (s)$ agree.
\end{proof}

\begin{corollary}[Four and higher dimensions]
  \label{cor:sigmahstrip2}Let $Q$ be a positive definite quadratic form on
  $\RR^d$ for $d \geq 4$. Then the limit
  \[ \widehat{\sigma} (s) = \lim_{N \rightarrow \infty} \left[ \sum_{0 <\|n\|_2
     \leq N} \frac{1}{Q (n)^s} - \int_{\|x\|_2 \leq N} \frac{1}{Q
     (x)^s} \mathd x \right] \]
  exists in the strip $d / 2 - 1 < \op{Re} s < d / 2$ and coincides therein
  with the analytic continuation of $\alpha (s)$.
\end{corollary}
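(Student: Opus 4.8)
The plan is to obtain this as an immediate specialization of Proposition~\ref{prop:sigmahstrip} to the Euclidean norm $\|\cdot\|_2$. That proposition already delivers existence of $\widehat{\sigma}(s)$, and its agreement with the analytic continuation of $\alpha(s)$, throughout the strip $\max(d/2-1,\lambda/2) < \op{Re} s < d/2$. Hence the entire task reduces to showing that, for $d \geq 4$ and the Euclidean ball, the discrepancy exponent $\lambda$ defined via \eqref{eq:sigmah0} satisfies $\lambda \leq d-2$; for then $\lambda/2 \leq d/2-1$, the maximum collapses to $d/2-1$, and the strip becomes exactly $d/2-1 < \op{Re} s < d/2$, as claimed.

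The one substantive ingredient is therefore a bound on the error term in the higher-dimensional Gauss sphere problem. Writing $A_d(N) = \#\{n\in\mathbb{Z}^d : \|n\|_2 \leq N\}$ and $V_d = \pi^{d/2}/\Gamma(1+d/2)$ for the volume of the unit ball, I would invoke the classical count $A_d(N) = V_d N^d + O(N^{d-2})$, valid for every $d \geq 5$, together with its companion at the boundary dimension $A_4(N) = V_4 N^4 + O(N^2\log N)$. In both cases the error is $O(N^\ell)$ for every $\ell > d-2$, so by the defining infimum in \eqref{eq:sigmah0} we obtain $\lambda \leq d-2$ unconditionally for $d \geq 4$; these facts, and sharper ones, are collected in the survey \cite{ikn-lattice04}.

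I expect the only real obstacle to be sourcing the lattice-point estimate, rather than any analysis internal to the paper, and it is worth flagging why the threshold sits precisely at $d=4$. In dimensions $d=2,3$ the best available bounds give $\lambda > d-2$ (for $d=2$ one has $\lambda$ near $2/3$, and for $d=3$ near $21/16$, with $\lambda = d-2$ only conjectural), so $\max(d/2-1,\lambda/2) = \lambda/2$ strictly exceeds $d/2-1$ and the clean endpoint is unavailable; this is consistent with the $d=2$ strip $1/3 < \op{Re} s < 1$ of \cite[Theorem~2]{latticesums-bbs}. From $d=4$ onward, however, $\lambda = d-2$ holds, so substituting $\lambda/2 = d/2-1$ into Proposition~\ref{prop:sigmahstrip} finishes the proof.
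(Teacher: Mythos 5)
Your proof is correct and follows essentially the same route as the paper: both reduce the corollary to Proposition \ref{prop:sigmahstrip} by citing the lattice-point bounds $O(N^{d-2})$ for $d\geq 5$ and $O(N^2\log N)$ for $d=4$ from \cite{ikn-lattice04} to conclude $\lambda\leq d-2$, so that the strip collapses to $d/2-1<\op{Re} s<d/2$.
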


\begin{proof}
  We recall, see {\cite{ikn-lattice04}}, the fact that, for all $d \geq
  5$,
  \[ \# \{n \in \mathbbm{Z}^d : \|n\|_2 \leq N\} - \op{vol} \{x \in
     \RR^d : \|x\|_2 \leq N\} = O (N^{d - 2}), \]
  while, for $d = 4$, the right-hand side needs to be replaced with, for
  instance, the rather classical $O (N^2 (\log N))$, or the improved $O (N^2
  (\log N)^{2 / 3})$ shown in {\cite{walfisz59}}. In any case, we conclude
  that, for all $d \geq 4$, the infimum $\lambda_d$ of all values $\ell
  \geq 0$ such that \eqref{eq:sigmah0} holds, is $\lambda_d = d - 2$. The
  claim therefore follows from Proposition \ref{prop:sigmahstrip}.
\end{proof}

\begin{remark}[Two and three dimensions]
  Thorough reports on the current status of the cases $d = 2$ and $d = 3$,
  missing in Corollary \ref{cor:sigmahstrip2}, can be found in
  {\cite{ikn-lattice04}}. In the case $d = 2$, it was shown by Hardy as well
  as Landau that $\lambda_2 \geq 1 / 2$. While it is believed that in
  fact $\lambda_2 = 1 / 2$, the best currently known bound is $\lambda
  \leq 131 / 208 \approx 0.6298$, obtained in {\cite{huxley-gauss2}}. For
  $d = 3$, it is known that $\lambda_3 \geq 1$ and it is believed that
  $\lambda_3 = 1$, in which case the conclusion of Corollary
  \ref{cor:sigmahstrip2} would also hold for $d = 3$. The smallest currently
  fully proven upper bound is $\lambda_3 \leq 21 / 16 = 1.3125$ from
  {\cite{hb-gauss3}}. \qede
\end{remark}

\begin{remark}[more general $p$-norms]
  Let us briefly note some results and their consequences for more general
  $p$-norms, again referring to {\cite{ikn-lattice04}} for further details and
  missing cases. Let $d \geq 2$. For integers $p > d + 1$ it is known
  that
  \[ \# \{n \in \mathbbm{Z}^d : \|n\|_p \leq N\} - \op{vol} \{x \in
     \RR^d : \|x\|_p \leq N\} = O \left( N^{(d - 1) (1 - 1 / p)}
     \right), \]
  and that the exponent in this estimate cannot be improved. This result was obtained in
  {\cite{randol-gaussp}} for even $p$, and in {\cite{kraetzel-gaussp}} for odd
  $p$. In light of Proposition \ref{prop:sigmahstrip}, we conclude that the
  limit
  \[ \widehat{\sigma} (s) = \lim_{N \rightarrow \infty} \left[ \sum_{0 <\|n\|_p
     \leq N} \frac{1}{Q (n)^s} - \int_{\|x\|_p \leq N} \frac{1}{Q
     (x)^s} \mathd x \right] \]
  exists in the strip $(d - 1) (1 - 1 / p) / 2 < \op{Re} s < d / 2$ and
  coincides therein with the analytic continuation of $\alpha (s)$. We note
  that this strip shrinks to $d / 2 - 1 / 2 < \op{Re} s < d / 2$ as $p
  \rightarrow \infty$. In particular, for $d = 2$, the physically interesting
  value $\widehat{\sigma} (1 / 2)$ always exists and equals $\alpha \left( 1 / 2
  \right)$. \qede
\end{remark}

\appendix
\renewcommand*{\thesection}{\Alph{section}}
\section{Brief review of cubic lattice sums}\label{sec:cubiclatticesums}

The $d$-dimensional cubic lattice sum
\begin{equation}
  Z_d (s) \assign \sum_{n_1, \ldots, n_d}' \frac{1}{(n_1^2 + n_2^2 +
  \cdots + n_d^2)^s} \label{eq:def:cubicsum},
\end{equation}
which converges for $s > d / 2$, is a special case of an Epstein zeta function
as introduced in \eqref{eq:def:epsteinzeta}. As such, the sum $Z_d (s)$
has a meromorphic continuation to the entire complex plane and satisfies the
functional equation
\begin{equation}
  \frac{Z_d (s) \Gamma (s)}{\pi^s} = \frac{Z_d (d/2 - s) \Gamma (d
  / 2 - s)}{\pi^{d / 2 - s}} . \label{eq:cubicfunc}
\end{equation}
The sum $Z_d (s)$ has a simple pole at $s = d / 2$ with residue $\pi^{d
/ 2} / \Gamma (d / 2)$.
We record that the values of $\pi^{d / 2} / \Gamma (d / 2)$, for $d = 1,2,\ldots,6$, are
\begin{equation*}
  1, \quad \pi, \quad 2\pi, \quad \pi^2, \quad \frac43 \pi^2, \quad \frac12 \pi^3.
\end{equation*}
The plots in Figures \ref{fig:alpha4} and \ref{fig:alphas} illustrate these
functions and their properties in small dimensions. Observe the symmetries
around the poles in Figure \ref{fig:alpha4c} and Figure \ref{fig:alphasb}.

\begin{figure}[htb]
  \begin{center}
    \subfigure[{$Z_4(s)$ on $[-2,8]$}\label{fig:alpha4a}]{\includegraphics[width=0.3\textwidth]{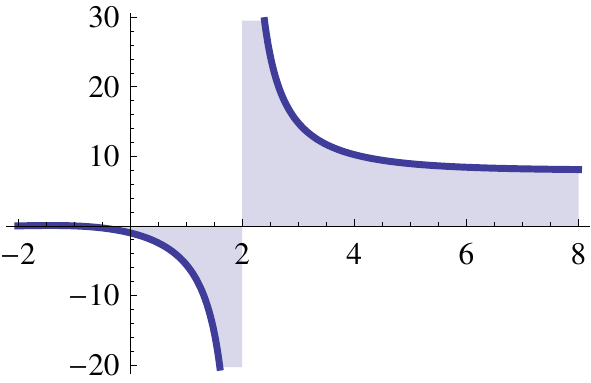}}
    \hfil
    \subfigure[{$Z_4(s)$ on $[-7,0]$}\label{fig:alpha4b}]{\includegraphics[width=0.3\textwidth]{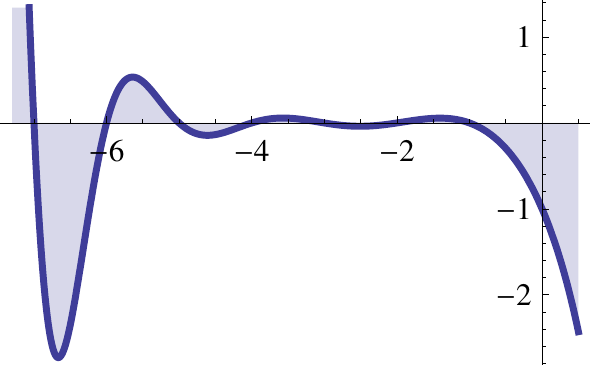}}
    \hfil
    \subfigure[{$Z_4(s) \Gamma(s) \pi^{-s}$}\label{fig:alpha4c}]{\includegraphics[width=0.3\textwidth]{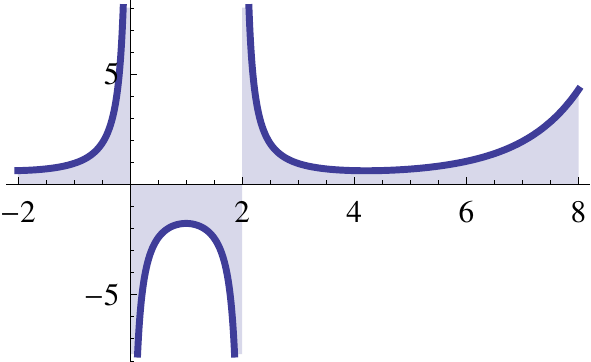}}
    \caption{Different views on $Z_4(s)$ on the real line.}
    \label{fig:alpha4}
  \end{center}
\end{figure}

\begin{figure}[htb]
  \begin{center}
    \subfigure[{$Z_d(s)$ for $d=2,3,4,5$}\label{fig:alphasa}]{\includegraphics[width=0.45\textwidth]{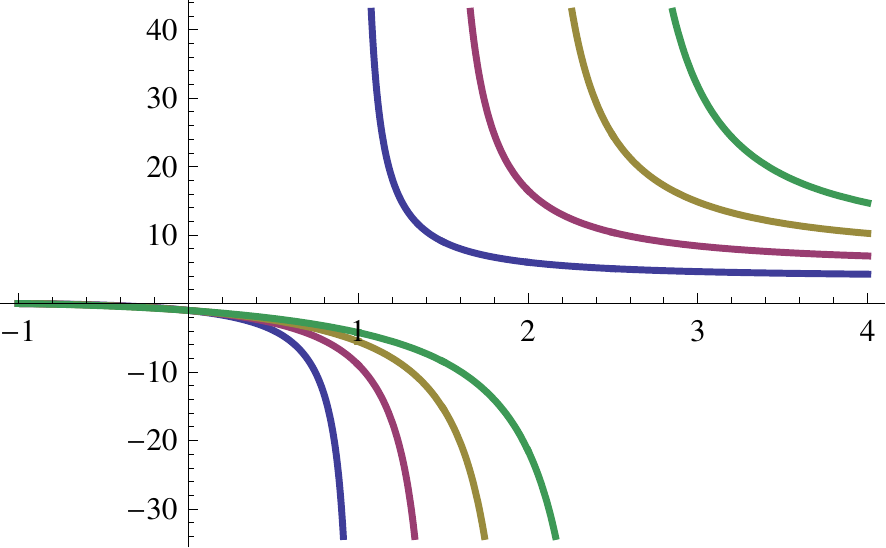}}
    \hfil
    \subfigure[{$Z_d(s) \Gamma(s) \pi^{-s}$ for $d=2,3,4,5$}\label{fig:alphasb}]{\includegraphics[width=0.45\textwidth]{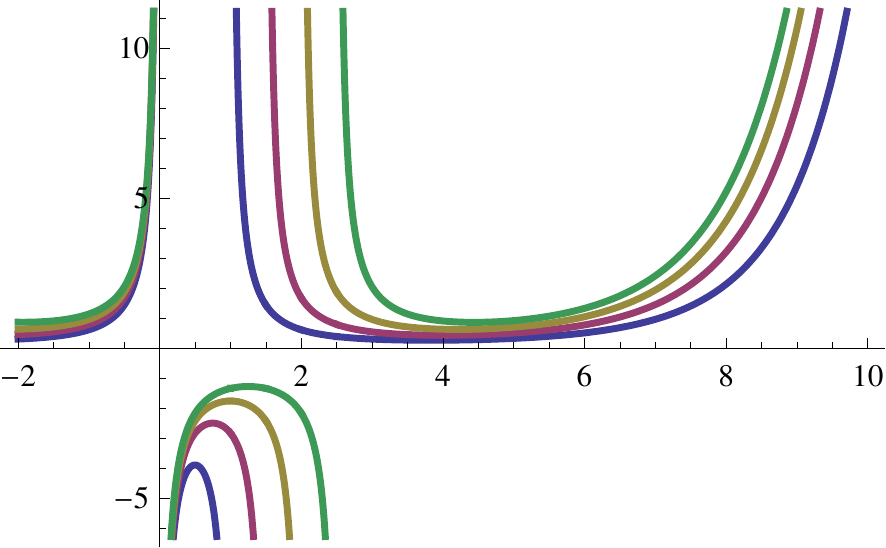}}
    \caption{The functions $Z_d(s)$ on the real line for various values of $d$.}
    \label{fig:alphas}
  \end{center}
\end{figure}

\begin{remark}[Lattice sums and integer representations]
  Let $r_d (n)$ denote the number of integer solutions (counting
  permutations and signs) of $n_1^2 + n_2^2 + \cdots + n_d^2 = n$. Clearly, by
  definition \eqref{eq:def:cubicsum}, the lattice sum $Z_d (s)$ is precisely the Dirichlet series for the sum-of-$d$-squares
  counting function $r_d (n)$, that is,
  \[ Z_d (s) = \sum_{n = 1}^{\infty} \frac{r_d (n)}{n^s} . \]
  In particular, Lagrange's theorem on the sum of four squares
  \cite{latticesums} shows that, if $d>3$, then $r_d(n)>0$ for all $n >0$.
  \qede
\end{remark}

The defining lattice sum \eqref{eq:def:epsteinzeta} only converges when
$\op{Re} s > d / 2$. Our next goal is to make the analytic continuation of
$Z_d (s)$ explicit, in particular in the critical strip $0 \leq
\op{Re} s \leq d / 2$. To this end, let us, for $\op{Re} s > 0$,
define the normalized Mellin transform $M_s [f]$ of a function $f$ on the
positive real line by
\[ M_s [f] := \frac{1}{\Gamma (s)} \int_0^{\infty} f (x) x^{s - 1} \mathd x. \]
The point of this normalization is that, for $\lambda > 0$,
\[ M_s [e^{- \lambda x}] = \frac{1}{\lambda^s} . \]
This allows many classes of lattice sum to be interpreted as the Mellin
transform of combinations of Jacobi theta functions. In the present case of
cubic lattice sums, one finds
\begin{eqnarray*}
  Z_d (s) & = & M_s \left[ \sum_{n_1, \ldots, n_d}' e^{- (n_1^2 + n_2^2
  + \cdots + n_d^2) x} \right]\\
  & = & M_s \left[ \left( \sum_{n = - \infty}^{\infty} e^{- n^2 x} \right)^d
  - 1 \right]\\
  & = & \pi^s M_s \left[ \theta_3^d (i x) - 1 \right],
\end{eqnarray*}
where
\[ \theta_3 (z) := \sum_{n = - \infty}^{\infty} e^{\pi i n^2 z} \]
is the third Jacobi special theta function. In order to obtain the analytic
continuation of $Z_d (s)$, we proceed in the classical fashion and use
the modular transformation
\[ \theta_3 (i / x) = x^{1 / 2} \theta_3 (i x) \]
to write, assuming $\op{Re} s > d / 2$,
\[ \int_0^1 \left( \theta_3^d (i x) - 1 \right) x^{s - 1} \mathd x =
   \frac{1}{s - d / 2} - \frac{1}{s} + \int_1^{\infty} \left( \theta_3^d (i x)
   - 1 \right) x^{d / 2 - s - 1} \mathd x. \]
It follows that
\begin{equation}
  Z_d (s) = \frac{\pi^s}{\Gamma (s)} \left[ \frac{1}{s - d / 2} -
  \frac{1}{s} + \int_1^{\infty} \left( \theta_3^d (i x) - 1 \right) \left(
  x^{s - 1} + x^{d / 2 - s - 1} \right) \mathd x \right] . \label{eq:cubicana}
\end{equation}
We note that the integral in \eqref{eq:cubicana} converges and is analytic for
all $s$. Since the zero of the gamma function cancels the $1 / s$ term, it is
clear from \eqref{eq:cubicana} that $Z_d (s)$ is indeed analytic except
for a simple pole at $s = d / 2$ with residue $\pi^{d / 2} / \Gamma (d / 2)$.
Moreover, the functional equation \eqref{eq:cubicfunc} is another  nearly immediate
consequence of \eqref{eq:cubicana}. Equation \eqref{eq:cubicana} is
well-suited to numerically compute $Z_d (s)$ as well as its analytic
continuation.

\begin{example}
  {\dueto{Exact evaluations}}\label{eg:cubicexact}In small even dimensions,
  the cubic lattice sums can be evaluated in terms of $\zeta (s)$ and $\beta
  (s)$, the Dirichlet series for the primitive character $\chi_{- 4}$ modulo
  $4$. By realizing the lattice sum $Z_d$ as, essentially, the Mellin
  transform of the power $\theta_3^d - 1$, where $\theta_3$ is as before the Jacobi
  theta function, one finds, for instance, the evaluations
  \begin{eqnarray*}
    Z_2 (s) & = & 4 \zeta (s) \beta (s),\\
    Z_4 (s) & = & 8 (1 - 2^{2 - 2 s}) \zeta (s - 1) \zeta (s),\\
    Z_6 (s) & = & 16 \zeta (s - 2) \beta (s) - 4 \zeta (s) \beta (s -
    2),\\
    Z_8 (s) & = & 16 \left( 1 - 2^{1 - s} + 4^{2 - s}
    \right) \zeta (s) \zeta (s - 3 ) .
  \end{eqnarray*}
  See {\cite{latticesums-zucker74}} for these and many further exact
  evaluations of lattice sums. In higher even dimensions, exact evaluations
  involve further $L$-functions. A more direct, but
  equivalent, approach to these evaluations is presented in
  {\cite{bc-dirichlet02}}, where the discussion is based on explicit formulas
  for $r_{2 d} (n)$. For instance, {\cite[Sec.
  6.2]{bc-dirichlet02}},
  \begin{eqnarray*}
    Z_{24} (s) & = & \frac{16}{691} \left( 2^{12 - 2 s} -
    2^{1 - s} + 1 \right) \zeta (s) \zeta (s - 11 )\\
    &  & + \frac{128}{691} \left( 259 + 745 \cdot 2^{4 - s} + 259 \cdot 2^{12
    - 2 s} \right) L_{\Delta} (s),
  \end{eqnarray*}
  where $L_{\Delta}(s) = \sum \tau(n) / n^s$ and $\tau(n)$ is Ramanujan's
  $\tau$-function (here, $\Delta=\eta^{24}$ in terms of the Dedekind $\eta$-function). We remark that the critical
  values of $L_\Delta$ are known to be \emph{periods}, that is, values of an integral
  of an algebraic function over an algebraic domain \cite{periods}. Moreover, up
  to the usual powers of $\pi$, all odd (respectively, even) critical values
  are rational multiples of each other. (More generally, all values
  $L_\Delta(m)$ for integers $m>0$ are \emph{periods}.) Ramanujan's $\tau$ satisfies
  many wonderful congruences. Moreover, Lehmer conjectured \cite{lehmer-tau}
  that $\tau(n)$ (while taking both signs) is never zero, as has been verified
  for more than the first $2 \cdot 10^{19}$ terms \cite{bosman-tau}. Lehmer's
  conjecture is also known to be implied by the alleged irrationality of the
  coefficients of the holomorphic part of a certain Maass-Poincar\'e series,
  see \cite[Example 12.6]{ono-unearthing}.

  The case of odd $d$ is much harder and no simple exact evaluations are
  known. We refer to, for instance, {\cite[Sec. 6]{bc-dirichlet02}} and \cite{latticesums}.
  It transpires in \cite{bc-dirichlet02} that, when considering $r_d(2n)$, the
  smallest odd cases $d=3,5$ are in many ways the hardest in terms of
  estimating asymptotic behavior.
  \qede
\end{example}

\begin{remark}
  [A curious but useful Bessel series] \label{ex:Bessel}

 We recall from {\cite[Sec. 6]{bc-dirichlet02}} a modified Bessel function series for $Z_d$.

 \begin{enumerate}[(a)]
   \item For all integers $d \geq 2$,
     \begin{align}
       Z_d (s) & = 2 d \,\frac{\Gamma \left( (2 s - d + 3) /
       2 \right)}{\Gamma (s+1)} \pi^{(d - 1) / 2}\, \zeta (2 s
       - d + 1 )  \label{eq:cubicbessel}\\
       &\quad + \frac{4 d \pi^{s + 1}}{\Gamma (s+1)} \sum_{m
       \geq 1} \frac{r_{d - 1} (m)}{m^{(d - 2 s - 3) / 4}}
       \sum_{n \geq 1} \frac{K_{(2 s - d + 3) / 2} \left( 2 \pi n \sqrt{m}
       \right)}{n^{(2 s - d - 1) / 2}}. \nonumber
     \end{align}
   \item We note that the first summand of \eqref{eq:cubicbessel}, just like
     the sum $Z_d (s)$ itself, is analytic except for a simple pole at $s = d /
     2$ with residue $\pi^{d / 2} / \Gamma (d / 2)$. Consequently, the double
     sum involving the Bessel terms defines an entire function. Indeed, this
     easily follows directly from the asymptotic fact, see \cite[Chapter 10,
     \S10.40]{DLMF}, that, for positive real argument, the Bessel function
     behaves as
     \[ K_s (x) \sim \sqrt{\frac{\pi}{2 x}} e^{- x} \]
     when $x \rightarrow \infty$. One thus finds that the double sum converges
     for all values of $s$ and defines an analytic function. In particular,
     \eqref{eq:cubicbessel} is another explicit representation of the analytic continuation of
     $Z_d (s)$ to the entire complex plane.

  \item When $r \assign 2s-d+3$ is an odd integer we need compute only Bessel
    function values at half-integers which become elementary
    \cite[\S10.47(ii) and \S10.49(ii)]{DLMF}. When $s = d/2-1$, the
    value of the jump discontinuity, then $r=1$ and we need  consider only
    $K_{1 / 2}\left( 2 \pi n \sqrt{m}\right)$ for integers $m,n>0$. We then
    have \cite[\S10.39 (ii)]{DLMF} that
    \[ K_{1/2}(z) = K_{-1/2}(z) = \sqrt{\frac{\pi}{2z}} e^{-z}. \]
    This reduces \eqref{eq:cubicbessel} to a rapidly convergent exponential
    double series. Summing the second series, for each  positive integer $d>1$, we obtain
    \begin{eqnarray}\nonumber
      Z_d (d / 2 - 1) &=& \frac{2 d \pi^{d / 2}}{\Gamma (d/2)}
      \left[ - \frac{1}{12} + \sum_{m \geq 1} \frac{r_{d - 1}
        (m) e^{- 2 \pi \sqrt{m}}}{(1 - e^{- 2 \pi \sqrt{m}})^2}
        \right]\\&=&\frac{2 d \pi^{d / 2}}{\Gamma (d/2)}
        \left[ - \frac{1}{12} + \frac 12\sum_{m \geq 1} \frac{r_{d - 1}(m)}{\cosh( 2 \pi \sqrt{m})-1} \right]\label{eq:cosh}
    \end{eqnarray}
    for the value corresponding to the jump in Theorem \ref{thm:jump}. Amongst odd integers, \eqref{eq:cosh} is most
    rapid for $d=3$ since $r_2(m)$ is the smallest coefficient set. For even
    integers, \eqref{eq:cosh} combines with Example \ref{eg:cubicexact} to
    provide evaluations of the hyperbolic sum. With $d=2$, this recovers
    \[ \sum_{m=1}^{\infty }\frac 1{ \cosh ( 2\pi m ) -1}
      = \frac1{12}-\frac1{4\pi}, \]
    as the simplest evaluation.
    \qede
  \end{enumerate}
\end{remark}

\section*{Conclusion}

We have been able to analyse the behaviour of Wigner limits for electron sums
in arbitrary dimensions quite extensively. The analysis sheds light on the remarkable interplay
between the physical and analytic properties of lattice sums. We also observe
that physicists typically proceed by taking Laplace and related transforms
quite formally. This suggests that the subtle boundary behaviour of the limit
$\sigma(s)$ would never be noticed without careful mathematical analysis.
Finally, it remains to conclusively answer Problem \ref{prob:VQD} in order to decide
whether, for $d>2$, every quadratic form indeed exhibits a jump at $d/2-1$ in
the corresponding Wigner limit.

\begin{acknowledgements}
The second author was funded by various Australian Research Council grants. The
third author would like to thank the Max-Planck-Institute for Mathematics in
Bonn, where he was in residence when this work was completed, for providing
wonderful working conditions.
\end{acknowledgements}

\newcommand{\etalchar}[1]{$^{#1}$}

\end{document}